\newtheorem{theorem}{Theorem}
\newtheorem{lemma}[theorem]{Lemma}
\def\defn#1{{\em #1}}
\def\Real{{\mathbb R}}
\def\Cmpx{{\mathbb C}}
\def\cnj#1{{{#1}^*}}
\def\dual#1{{\widetilde{#1}}}
\def\union{\cup}
\def\Union{\bigcup}
\def\inter{\cap}
\def\innerprod(#1,#2){{\left<#1\,,\,#2\right>}}
\def\Set#1{{\left\{#1\right\}}}
\def\Re{{\textup{Re}}}
\def\Im{{\textup{Im}}}
\def\qquadtext#1{{\qquad\text{#1}\qquad}}
\def\qquadand{\qquadtext{and}}
\def\quadtext#1{{\quad\text{#1}\quad}}
\def\quadand{\quadtext{and}}
\def\pfrac#1#2{{\frac{\partial #1}{\partial #2}}}
\def\Ebun{{\cal E}}
\def\Nbun{{\cal N}}
\def\Horiz{{\cal H}}
\def\Vert{{\cal V}}
\def\Cd{{\dot C}}
\def\Adot{\dot A}
\def\supp{\textup{supp}}
\def\imu{{\mu}}
\def\inu{{\nu}}
\def\isigma{{\sigma}}
\def\Pihom{\Pi_{\textup{\bf H}}}
\def\Zhom{Z_{\textup{\bf H}}}
\def\What{\hat W}
\def\IC#1#2{{i^{(#1)}_{#2}}}
\def\len#1{\left|#1\right|}
\def\MX{{M_X}}
\def\MY{{M_Y}}
\def\EbunX{{\Ebun_X}}
\def\PiX{p_X}
\def\PiY{p_Y}
\def\piX{{\pi_X}}
\def\piY{{\pi_Y}}
\def\ppY{\varpi_Y}
\def\ppX{\varpi_X}
\def\psihat{{\hat\psi}}
\def\detg{|\det g|}
\def\DomPsi{{\cal D}}
\def\hashx{\star_{\!X}}
\def\MF{{\boldsymbol F}}
\def\MD^#1_#2{{\boldsymbol D^\speciesa}^{#1}{}_{#2}}
\def\ML^#1_#2{{\boldsymbol L^\speciesa}^{#1}{}_{#2}}
\def\ID_#1^#2{{\boldsymbol D^\speciesa}_{#1}{}^{#2}}
\def\IL_#1^#2{{\boldsymbol L^\speciesa}_{#1}{}^{#2}}
\def\RL{{\boldsymbol L^\speciesa}}
\def\speciesa{{\text{\tiny$\lfloor\!\alpha\!\rceil$}}}
\def\speciesas#1{{\text{\tiny$\lfloor\!\alpha_{#1}\!\rceil$}}}
\def\speciesabig{{\text{$\lfloor\!\alpha\!\rceil$}}}
\def\speciesel{{\text{$\lfloor\!\textup{el}\!\rceil$}}}
\def\speciesion{{\text{$\lfloor\!\textup{ion}\!\rceil$}}}
\def\chione{\chi^{\textup{\bf H}}}
\def\chione{X}
\def\FT{\hat}
\def\FTPiHom{\FT{\Pi}_{\textup{\bf H}}}
\def\kHat{\hat{k}}
\def\hHat{\hat{h}}
\def\omegaHat{\hat{\omega}}
\def\kkHat{\hat{k}}
\def\kk{k}
\def\II{I}
\def\mapint{\lefteqn{\text{\scriptsize$\ \,\triangle$}}\int}
\def\mapints{\lefteqn{\text{\tiny$\ \triangle$}}\int}
\def\Vsig{\underline\sigma}
\def\zetabd{{\boldsymbol\zeta}}
\def\zetao#1{{\zeta^\speciesa_{1,#1}}}
\def\zetaoh#1{{\check\zeta^\speciesa_{1,#1}}}
\def\SigmaM{\Sigma_M}
\def\SigmaMX{\Sigma_{M_X}}
\def\SigmaMY{\Sigma_{M_Y}}
\def\SigmaE{\Sigma_{\Ebun}}
\def\SigmaEX{\Sigma_{\Ebun_X}}
\def\SigmaEY{\Sigma_{\Ebun_Y}}
\def\MXp{{M_X^+}}
\def\MYp{{M_Y^+}}
\def\EbunXp{{\Ebun_X^+}}
\def\EbunYp{{\Ebun_Y^+}}
\def\Mp{{M^+}}
\def\Ebunp{{\Ebun^+}}
\def\Zonez{\check Z_1}
\def\SigmaN{\Sigma_N}
\def\kappaQM{{\cal Q}_0^2}
\begin{document}

\title{Covariant Constitutive Relations and Relativistic Inhomogeneous Plasmas}
\author{J Gratus}
\author{R W Tucker}
\affiliation{Physics Department Lancaster University and the Cockcroft
Institute}

\begin{abstract}
The notion of a two-point {\it susceptibility kernel} used to describe
linear electromagnetic responses of dispersive continuous media in
non-relativistic phenomena is generalized to accommodate the
constraints required of a causal formulation in spacetimes with
background gravitational fields.  In particular the concepts of
spatial material inhomogeneity and temporal non-stationarity are
formulated within a fully covariant spacetime framework.  This
framework is illustrated by re-casting the Maxwell-Vlasov equations
for a collisionless plasma in a form that exposes a 2-point
electromagnetic susceptibility kernel in spacetime.  This permits the
establishment of a perturbative scheme for non-stationary
inhomogeneous plasma configurations. Explicit formulae for the
perturbed kernel are derived in both the presence and absence of
gravitation using the general solution to the relativistic equations
of motion of the plasma constituents. In the absence of gravitation
this permits an analysis of collisionless damping in terms of a system
of integral equations that reduce to standard Landau damping of
Langmuir modes when the perturbation refers to a homogeneous
stationary plasma configuration.  It is concluded that constitutive
modelling in terms of a 2-point susceptibility kernel in a covariant
spacetime framework offers a natural extension of standard
non-relativistic descriptions of simple media and that its use for
describing linear responses of more general dispersive media has wide
applicability in relativistic plasma modelling.
\end{abstract}

\pacs{52.27.Ny, 41.20.-q, 52.25.Dg, 52.25.Fi, 52.25.Mq}

\maketitle

\section{Introduction}
\label{ch_Disp}

The behaviour of a material medium in response to electromagnetic and
gravitational fields encompasses a vast range of classical and quantum
physics. For media composed of a large collection of molecular or
ionized structures recourse to a statistical description is required
and this often leads to a coarser description in terms of a few
thermodynamic variables and their correlations. Such a description
relies on the efficacy of particular constitutive models or
phenomenological constitutive data that serve to circumscribe its
domain of applicability.

For phenomena where the relative motions of the constituents approach
the speed of light in vacuo or the material experiences bulk
accelerations or gravitational interactions such constitutive
descriptions must be formulated within a relativistic framework.
However even within a spacetime covariant formulation there remains
great freedom in how to accommodate electromagnetic responses that depend on
material dispersion induced by spatial correlations or temporal delays
of electromagnetic interactions\cite{gratus2010covariant}. The incorporation of such effects in a
theoretical description often relies on a detailed structural model of
the medium particularly if it is inhomogeneous or external
gravitational gradients are relevant. Notwithstanding these
complexities simple constitutive models have proved of considerable
value for homogeneous polarizable media that exhibit
temporal dispersion in a laboratory frame where gravity plays no
essential role. Indeed the notion of permittivity and permeability
tensors is often adequate to parametrize a large range of
experimental linear responses of simple polarizable media to external
static and dynamic electromagnetic fields.  More generally, for non-dispersive
media these tensors can be subsumed into a {\it susceptibility kernel}
that readily accommodates special relativistic effects on the bulk
motion of media.

In this article the degree to which the notion of a {\it
  susceptibility kernel} can be generalized to describe linear
electromagnetic responses of dispersive continuous media is
explored. In particular the effects of spatial material inhomogeneity
and non-stationarity will be formulated within a fully covariant
spacetime framework. In this manner the formulation can accommodate
arbitrary gravitational and electromagnetic interactions. The
framework will be illustrated by re-casting the Maxwell-Vlasov
equations for a collisionless plasma in a form that exposes a
2-point\footnote{Points here refer in general to events in a spacetime
  manifold.}  electromagnetic susceptibility kernel in an arbitrary
external gravitational field. This permits the establishment of a
perturbative scheme for non-stationary inhomogeneous plasma
configurations in terms of such a kernel. Explicit formulae for the
perturbed kernel are derived in both the presence and absence of
gravitation in terms of the general solution to the equations of
motion of the plasma constituents. In the absence of gravitation this
permits an analysis of collisionless damping in terms of a system of
integral equations that reduce to standard Landau damping of Langmuir
modes when the perturbation refers to a homogeneous stationary plasma
configuration.

It is concluded that constitutive modelling in terms of a 2-point
susceptibility kernel in a covariant spacetime framework offers a
natural extension of standard non-relativistic descriptions of simple
media and that its use for describing linear responses of more general
dispersive media has wide applicability in relativistic plasma
modelling.

\section{Constitutive Relations}

\label{ch_Notation}

In the following spacetime $M$ is considered a globally hyperbolic,
topologically trivial four dimensional manifold endowed with a metric
tensor $g$ with signature $(-1,+1,+1,+1)$ describing gravitation.  A
closed 2-form $F$ describes the electromagnetic field. The bundle of
exterior $p-$forms over $M$ is denoted $\Lambda^p M$ and its sections
$\Gamma\Lambda^p M$ are $p-$forms on $M$. The bundle of all forms is
$\Lambda M= \Union_{p=0}^{p=4} \Lambda^p M$.  Associated with $g$ is
the Hodge map $\star$. Thus for $\alpha\in\Gamma\Lambda^p M$ its
corresponding Hodge dual is denoted $\star\alpha \in \Gamma^{4-p}
\Lambda M$.  The tangent bundle over $M$ is denoted $TM$ and its
sections $\Gamma TM$ are vector fields on $M$.  We call the 1-form
$\dual{J}=g(J,-)\in\Gamma\Lambda^1 M$ the {\it metric dual} of the
vector field $J\in\Gamma TM$.  Maxwell's equations for the
electromagnetic field $F\in\Gamma\Lambda^2 M$ in a polarizable medium
containing an electric current $J\in\Gamma TM$, satisfying the
continuity (or current conservation) equation
$d\star\dual{J}=0$, are written
%[
\begin{align}
d F=0 \qquadand d\star G = - \star \dual{J}
\label{Disp_Maxwell}
\end{align}
%]
The excitation 2-form $G\in\Gamma\Lambda^2 M$ can always be
expressed
%[
\begin{align}
G = \epsilon_0 F + \Pi
\label{Disp_def_G}
\end{align}
%]
in terms of the permittivity $\epsilon_0$ of free space.  The
polarization\footnote{In this article the term polarization will refer
  to any state of the medium that gives rise to magnetization or
  electrical polarization in some frame} 2-form
$\Pi\in\Gamma\Lambda^2 M$ results from all electromagnetic
field sources not made explicit in $J$.

In general $\Pi$ and $J$ are non-linear functionals of $F$
and other fields such as matter and initial data on any initial spacelike
hypersurface $\SigmaM\subset M$.  Such functionals are the {\it{constitutive
    relations}} describing $G$ and $J$ in terms of $F$ and these other
fields.

It is convenient to introduce integration on a fibred manifold $\Nbun$
of dimension $n+r$ with projection $\pi_\Nbun:\Nbun\to N$ over a
manifold $N$ of dimension $n$. Thus at each point $\sigma\in N$ one
has the fibre $\Nbun_\sigma=\pi_\Nbun^{-1}\Set{\sigma}=
\Set{(\sigma',\varsigma)\in\Nbun\,\big|\,
  \pi_\Nbun(\sigma',\varsigma)=\sigma}$ so $\dim(\Nbun_\sigma)=r$ is
the fibre dimension.  For $\alpha\in\Gamma\Lambda^{p+r}\Nbun$ we
define\cite{bott1982differential,de1984differentiable} the form
$\mapints_{\pi_\Nbun}\alpha\in\Gamma\Lambda^{p} N$ by
%[
\begin{align}
\int_N\beta\wedge\mapint_{\pi_\Nbun}\alpha
=
\int_\Nbun \pi_\Nbun^\star(\beta)\wedge \alpha
\label{Notation_mapint}
\end{align}
%]
for all $\beta\in\Gamma\Lambda^{n-p} N$.

In terms of local coordinates $(\sigma^1,\ldots,\sigma^n)$ and
$(\sigma^1,\ldots,\sigma^n,\varsigma^1\ldots \varsigma^r)$ for patches
on $N$ and $\Nbun$
respectively, one may write the fibre integral
%[
\begin{equation}
\begin{aligned}
\bigg(\mapint_{\pi_\Nbun}\alpha\bigg)\bigg|_\sigma
&=
\sum_{1\le I_1<\ldots < I_p\le n}
d\sigma^{I_1}\wedge\ldots\wedge d\sigma^{I_p}
%\times
%\\&\qquad\qquad
\int_{\varsigma\in \Nbun_\sigma}
i_{\partial/\partial \sigma^{I_p}} \ldots i_{\partial/\partial \sigma^{I_1}}
\alpha|_{(\sigma,\varsigma)}
\end{aligned}
\label{Disp_def_fibre_int}
\end{equation}
%]
where $\Nbun_\sigma=\pi_\Nbun^{-1}(\Set{\sigma})$ is the fibre over the point
$\sigma\in N$ and $i_{\partial/\partial \sigma^{I_k}}$ is the contraction
on forms.  Observe that if $\alpha$ does not contain the factor
$d\varsigma^1\wedge\cdots\wedge d\varsigma^r$ then $\mapints_{\pi_\Nbun
}\alpha=0$. The proof of this is given in appendix lemma \ref{lm_fibre_int}.

A key result of fibre integration, used to establish the current
continuity equation, is that it commutes with the exterior derivative:
%[
\begin{align}
\bigg(d \mapint_{\pi_\Nbun} \alpha\bigg)\bigg|_{\sigma}
= \bigg(\mapint_{\pi_\Nbun} d \alpha\bigg)\bigg|_{\sigma}
\label{Notation_mapint_comm_d}
\end{align}
%]
for $\sigma$ not on the boundary of $N$ provided the support of $\alpha$
does not intersect the boundary of $\Nbun$. The proof is given in
appendix lemma \ref{lm_d_mapint}.

In general models for $\Pi$ demand a knowledge of the dynamics of
sources responsible for polarization as well as any permanent
polarization that may exist in the medium. A full dynamical description
depends on a specification of appropriate initial value data
$\zetabd$ on $\SigmaM$. The exact structure of $\zetabd$ depends on
the sources of the polarization. For the plasma model described in
section \ref{ch_Plasma} the initial data corresponds to the velocity
profile for each particle species at each point on $\SigmaM$ in the
plasma.

In this article $\Pi$ is considered to be an {\it affine}
functional of $F$ of the form
%[
\begin{align}
\Pi[F,\zetabd]= \mapint_{\PiX} \chi \wedge \PiY^\star(F) + Z[\zetabd]
\label{Disp_Pi}
\end{align}
%]
for some functional $Z$ of $\zetabd$.
The first term on the right is  expressed in terms of the
fibre integral of a two-point \defn{susceptibility kernel}
$\chi\in\Gamma\Lambda^4(\MX\times\MY)$ expressible locally as
%[
\begin{align}
\chi=\tfrac14\chi_{abcd}(x,y) dx^a\wedge dx^b\wedge dy^c\wedge dy^d
\label{Disp_chi_coords}
\end{align}
%]
Here $\MX$ and $\MY$ are two copies of $M$, locally coordinated by
$(x^0,\ldots,x^3)$ and $(y^0,\ldots,y^3)$ respectively, with
projections $\PiX:\MX\times\MY\to\MX$, $\PiY:\MX\times\MY\to\MY$,
$\PiX(x,y)=x$, $\PiY(x,y)=y$ and initial hypersurfaces
$\SigmaMX\subset\MX$ and $\SigmaMY\subset\MY$.  Throughout,
summation is over Roman indices $a,b,c=0,1,2,3$ and Greek
indices $\imu,\inu,\isigma=1,2,3$.

To consistently remove any reference to $M$ (without a subscript) let
$F\in\Gamma\Lambda^2\MY$, $\epsilon_0 F\in\Gamma\Lambda^2\MX$,
$G\in\Gamma\Lambda^2\MX$, $J\in\Gamma T\MX$ and
$\Pi[F,\zetabd]\in\Gamma\Lambda^2\MX$.  Thus $\epsilon_0$ can be
regarded as a map $\epsilon_0:\Gamma\Lambda^2\MY\to\Gamma\Lambda^2\MX$
which is the pullback of the natural isomorphism $\MX\to\MY$, together
with a scaling to accommodate the choice of electromagnetic units.

In terms of local coordinate bases on $\MX$ and $\MY$ the components
of (\ref{Disp_Pi}) are
%[
\begin{align}
\Pi[F,\zetabd]_{ab}(x)= \int_{y\in M}\tfrac14 \chi_{abcd}(x,y)\, F_{ef}(y)
\,dy^{cdef}
+ Z[\zetabd]_{ab}
\label{Disp_coords_Pi}
\end{align}
%]
in a multi-index notation with
%[
\begin{align*}
dx^{a_1\ldots a_p}\equiv dx^{a_1}\wedge \cdots \wedge dx^{a_p}
\end{align*}
%]
and
%[
\begin{align*}
\IC{x}{a_1\ldots a_p}\equiv i_{\pfrac{}{x^{a_p}}}\cdots
i_{\pfrac{}{x^{a_1}}}
\end{align*}
%]
(Note the reverse order for internal contraction.)
Summations over multi-indices $I\subset\Set{1,\ldots,n}$ considered as
an ordered $p$-list $I_1<I_2<\ldots<I_p$ of length $\len{I}=p$ will
also be employed. Thus
%[
\begin{align*}
dx^I\equiv dx^{I_1\cdots I_p}= dx^{I_1}\wedge\cdots\wedge dx^{I_p}
\end{align*}
%]
and
%[
\begin{align*}
\IC{x}{I}\equiv \IC{x}{I_1\cdots I_p} = i_{\pfrac{}{x^{I_p}}}\cdots
i_{\pfrac{}{x^{I_1}}}
\end{align*}
%]
so that, via summation, if $\alpha\in\Gamma\Lambda^pM$ then
$dx^I \wedge\IC{x}{I}\alpha = \alpha$ where $\len{I}=p$.

In this notation the product manifold $\MX\times\MY$ inherits the
following maps that will be employed below:
%[
\begin{align*}
& d_X:\Gamma\Lambda^p(\MX\times\MY) \to \Gamma\Lambda^{p+1}
(\MX\times\MY) \,,\quad
\\&\qquad d_X (\alpha) =
  \pfrac{\alpha_{IJ}}{x^a} dx^a\wedge dx^I\wedge dy^J
\\
& d_Y:\Gamma\Lambda^p(\MX\times\MY) \to \Gamma\Lambda^{p+1}
(\MX\times\MY) \,,\quad
\\&\qquad d_Y (\alpha) =
  \pfrac{\alpha_{IJ}}{y^a} dy^a\wedge dx^I\wedge dy^J
\\
& \hashx:\Gamma\Lambda(\MX\times\MY) \to \Gamma\Lambda
(\MX\times\MY) \,,\quad
\\&\qquad \hashx(\alpha) =
  \alpha_{IJ} (\star dx^I)\wedge dy^J
\end{align*}
%]
where $\alpha=\alpha_{IJ}\, dx^I\wedge dy^J$

Since $F=dA$ and for $A$ with compact support away from any boundary of
$\MY$ it follows from (\ref{Disp_Pi}) that
%[
\begin{align*}
\Pi[F,\zetabd]= -\mapint_{\PiX} (d_Y\chi) \wedge \PiY^\star(A) + Z[\zetabd]
\end{align*}
%]
Hence $\Pi[F,\zetabd]$ remains invariant\footnote{When $A$ is not
  compact on $\MY$ invariance is modulo a boundary term.} under the gauge transformation
%[
\begin{align}
\chi\quad{\longrightarrow}\quad \chi + d_Y \check\zeta
\label{Disp_gauge_dYchi}
\end{align}
%]
for any $\check\zeta=\check\zeta_{abc} dx^{ab}\wedge
dy^c\in\Gamma\Lambda^3(\MX\times\MY)$. Since the support of $A$ can be
made arbitrarily small $d_Y\chi$ is uniquely specified by
$\Pi[F,\zetabd]$. Furthermore 
%[
\begin{align*}
d\star \Pi[F,\zetabd]= 
-\mapint_{\PiX} (d_X\star_X d_Y\chi) \wedge \PiY^\star(A) + d\star Z[\zetabd]
\end{align*}
%]
hence $d\star \Pi[F,\zetabd]$ is invariant under the gauge
transformation 
%[
\begin{align}
\chi\quad{\longrightarrow}\quad \chi + d_Y \check\zeta + \hashx d_X\check\xi
\label{Disp_gauge_dstardYchi}
\end{align}
%]
for any $\check\zeta=\check\zeta_{abc} dx^{ab}\wedge dy^c$ and
$\check\xi=\check\xi_{abc} dx^{a}\wedge dy^{bc}$. Similarly $d_X\hashx
d_Y \chi$ is uniquely determined by $d\star \Pi[F,\zetabd]$.

In general, the permittivity functional $\Pi$ is a non-local
functional in spacetime given by the integral
(\ref{Disp_coords_Pi}). If $\chi$ is smooth, and not identically zero,
then $\Pi$ is always non-local. However for distributional
susceptibility kernels it is possible for $\Pi$ to remain local.  In
this category one has the local, linear Minkowski constitutive relations
%[
\begin{align*}
\Pi[F] = \epsilon_0(\epsilon_r-1) i_v F\wedge\dual{v}
+ \epsilon_0(\mu^{-1}_r - 1) \star\big((i_v\star F)\wedge F\big)
\end{align*}
%]
where $v\in\Gamma T \MY$ is a vector field representing the bulk
4-velocity of the medium and $\epsilon_r,\mu_r\in\Gamma\Lambda^0\MY$
are the relative permittivity and permeability scalars of the medium.
These relations can be represented by a distributional susceptibility
kernel with support on the diagonal set
$\Set{(x,y)\in\MX\times\MY|x=y}$.

In general $\Pi$ is said to be causal on all of $M$ if $\Pi|_x$ only
depends of the values of $F$ which lie on or within the past
light-cone\cite{wald1984general}${}^{,}$ \footnote{We write $y\in
  J^-(x)$ if $x$ is (timelike or lightlike) causally
  connected to $y$ and $x$ lies in the future of $y$}
$J^-(x)\subset\MY$ of $x$. If $\Pi$ depends on $\zetabd$ it may be
causal on $\MXp$ where $\MXp=\SigmaMX\union\Set{x\text{ lies to the
    future of }\SigmaMX}$.  The functional $\Pi$ is causal on $\MXp$
if $\Pi[F,\zetabd]|_x$ only depends on the values of $F$ and $\zetabd$
which lie on or within its past light-cone $J^-(x)\inter \MXp$ of $x$
and $x\in\MXp$. The data functional $Z$ is casual on $\MXp$ if
$Z[\zetabd]|_x$ depends only on $\zetabd\in\SigmaMX\inter J^-(x)$ for
all $x\in\MXp$.  For $\Pi$ to be causal on $\MXp$ it is necessary and
sufficient (lemma \ref{lm_Causal} in the appendix) that the following
be satisfied:
\begin{itemize}
\item
$Z$ is causal on $\MXp$,
\item
$(d_Y\chi)|_{(x,y)}=0$
for all $(x,y)\in\MXp\times\MYp$ such that $y \notin J^-(x)$ and
\item
$\iota_{\SigmaMY}^\star(\chi)\,\big|\,_{(x,y)}=0$ for all
$(x,y)\in\MXp\times\SigmaMY$ such that $y \notin J^-(x)$, where
$\iota_{\SigmaMY}:\MXp\times\SigmaMY\hookrightarrow\MXp\times\MYp$
is the natural embedding.
\end{itemize}

\subsection{Spacetime homogeneous constitutive relations for media in
  Minkowski spacetime}
\label{sch_Homo}

Minkowski spacetime has properties that underpin the notions of
material spatial homogeneity and stationary
processes.  Being isomorphic to a real 4-dimensional vector space it
can be given an affine structure in addition to its light-cone
structure.  Physically this implies that no particular point in a
spacetime without gravitation has a distinguished status and the
concepts of material and field energy, momentum and angular momentum
can be defined in terms of the Killing symmetries of the spacetime
metric.  Since all points of the spacetime are equivalent relative to
this affine structure it is sufficient to denote $\MX$ and $\MY$ by
$M$ and, relative to any point chosen as origin, a point with
coordinates $x$ can be identified with a vector denoted by $x\in
\Real^4$.  It is then convenient to introduce the Minkowski
translation map $A_z:M\to M$, $A_z(x)=x+z$ that maps points $x$ to
$x+z$ on $M$.

If the electromagnetic properties of an unbounded medium are independent of
location in spacetime they will be called {\it spacetime
  homogeneous}. Such electromagnetic constitutive properties imply
that variations in $F$ at event $y\in M$ produce an induced variation
in a functional $\Pihom[F]$ at event $x\in M$, via a kernel
$\chi_{abcd}(x,y)$ that depends on the 4-vector $x-y$.  If the
constitutive relation is causal then there is no induced variation if
$x\notin J^+(y)$. Furthermore in a spacetime homogeneous medium
$Z[\zetabd]=\Zhom$ where $\Zhom\in\Gamma\Lambda^2 M$ is independent of
$\zetabd$.

In terms of $A_z$ an electromagnetic constitutive functional $\Pihom$ is
given by
%[
\begin{align}
\Pihom[F]= \mapint_{\PiX} \chi \wedge \PiY^\star(F) + \Zhom
\label{Disp_homo_Pi}
\end{align}
%]
The functional $\Pihom$ is said to be {\it{spacetime
    homogeneous}}\footnote{Note that this definition of homogeneity
  refers only to the electromagnetic properties of a medium.}  if
%[
\begin{align}
\Pihom[A_z^\star F]=A_z^\star \Pihom[F]
\label{Disp_homo_def}
\end{align}
%]
This follows if
the susceptibility
kernel $\chi$ satisfies
%[
\begin{align}
\chi|_{(x+z,y+z)}=\chi|_{(x,y)}
\label{Disp_homo_chi}
\end{align}
%]
and $A_z^\star \Zhom= \Zhom$. The
contribution $\Zhom$ may model the presence of an externally
prescribed stationary uniform permanent magnetic or electric polarization.
Equation (\ref{Disp_homo_chi}) implies the components of $\chi$ in
(\ref{Disp_chi_coords}) can be written
%[
\begin{align}
\chi_{abcd}(x,y)=\chione_{abcd}(x-y)
\label{Disp_homo_hat_chi}
\end{align}
%]
where
%[
\begin{align}
\chione_{abcd}(x) = \chi_{abcd}{(x,0)}
\label{Disp_homo_hat_chi_res}
\end{align}
%]
Thus, in a Minkowski spacetime for materials with electromagnetic spacetime homogeneous
properties, (\ref{Disp_coords_Pi}) can be written in terms of a
convolution integral:
%[
\begin{equation}
\begin{aligned}
\Pihom[F]_{ab}(x)&=\tfrac14 \int_{y\in M}\chione_{abcd}(x-y) F_{ef}(y)
dy^{cdef}
+(\Zhom)_{ab}
\\&\equiv \tfrac14 \epsilon^{cdef}(\chione_{abcd} * F_{ef})(x)
+(\Zhom)_{ab}
\end{aligned}
\label{Disp_homo_conv}
\end{equation}
%]
where $\epsilon^{cdef}= \pm 1,0$ denotes the Levi-Civita alternating
symbol in coordinates in which the metric tensor takes the form
$g=\eta_{ab}dx^a\otimes dx^b$ where
$\eta_{ab}=\text{diag}(-1,+1,+1,+1)$. In these coordinates
the $(\Zhom)_{ab}$ are all constants.

Let $\FT{F}_{ef}(k)$ and $\FTPiHom[F]_{ab}(k)$ denote the Fourier
transforms of $F_{ef}(x)$ and $\Pihom[F]_{ab}(x)$ respectively, i.e.
%[
\begin{align*}
\FT{F}_{ef}(k)=\int_{x\in\Real^4} F_{ef}(x) e^{i k\cdot x} dx^{0123}
\end{align*}
%]
and
%[
\begin{align*}
\FTPiHom[F]_{ab}(k)=\int_{x\in\Real^4} \Pihom[F]_{ab}(x) e^{
  i k\cdot x} dx^{0123}
\end{align*}
%]
where $k=k_a dx^a$, $k\cdot x=k_a x^a$. Similarly let $\FT
\chione_{ab}{}^{ef}(k)$ be the Fourier transformation of
$\tfrac12\epsilon^{cdef}\chione_{abcd}(x)$, i.e.
%[
\begin{align}
\FT\chione_{ab}{}^{ef}(k)=
\tfrac12\epsilon^{cdef}\int_{x\in\Real^4}\chione_{abcd}(x) e^{i k\cdot
  x} dx^{0123}
\label{Disp_homo_chi_k}
\end{align}
%]
If $\Zhom=0$ then it follows from (\ref{Disp_homo_conv}) that:
%[
\begin{align}
\FTPiHom[F]_{ab}(k)=\tfrac12\FT\chione_{ab}{}^{cd}(k)\,\FT F_{cd}(k)
\label{Disp_homo_freq}
\end{align}
%]
Since $\chi_{abcd}$ is a real function on $M$ its Fourier transform
satisfies
%[
\begin{align*}
\cnj{\FT\chione_{ab}{}^{cd}(k)}=\FT\chione_{ab}{}^{cd}(-k)
\end{align*}
%]
The 36 components of $\FT \chione_{ab}{}^{cd}(k)$ subject to this
symmetry can be expressed in terms of permittivity, permeability and
magneto-electric tensors relative to any observer frame.  A
specification of these components together with relations that
determine the electric current $J$ serve as an electromagnetic model
for a spacetime homogeneous medium in Minkowski spacetime.  If the medium lacks
this electromagnetic homogeneity recourse to the Fourier transform
(\ref{Disp_homo_conv}) is not possible and the constitutive properties
must be given in terms of a 2-point kernel and (\ref{Disp_coords_Pi}).

\section{Constitutive models for a collisionless ionized plasma}
\label{ch_Plasma}

As noted in the introduction the computation of the susceptibility for
homogeneous stationary dispersive media owes much to phenomenological
models and input from experiment. For certain conductors,
semi-conductors, insulators and low-dimensional structures much can
also be learnt from the application of quantum theory.  For
inhomogeneous and anisotropic media subject to non-stationary electromagnetic
fields linear responses are often the subject of a perturbation
approach.  This is particularly so in the case of ionized gases.

As an application of the above formalism the classical linear response
of a fully ionized inhomogeneous non-stationary collisionless plasma
to a perturbation is considered in the presence of an arbitrary
background gravitational field. The perturbed constitutive tensor will
be calculated in terms of solutions to the classical Maxwell-Vlasov
equations for the system.  This system is described in terms of the
electromagnetic 2-form $F\in\Gamma\Lambda^2 \Mp$ over a gravitational
spacetime $\Mp$, lying in the future of an initial hypersurface
$\SigmaM$, and a {\it collection of one-particle ``distribution''
  forms} (of degree 6), $\theta^\speciesa\in\Gamma\Lambda^6\Ebunp$
(one for each charged species of particle $\speciesabig$ with mass
$m^\speciesa$ and charge $q^\speciesa$) on the {\it upper unit
  hyperboloid bundle} $\pi:\Ebunp\to\Mp$ over $\Mp$.  The
$7$-dimensional manifold $\Ebunp$ is a sub-bundle of the
$8$-dimensional tangent bundle $T\Mp$ over $\Mp$ whose sections are
all future pointing time-like unit vector fields on $\Mp$. Thus
generic elements of $\Ebunp$ can be written $(z,w)$ with $z\in\Mp$ ,
$\pi(z,w)=z$ and $g(w,w)=-1$. The initial values of the one-particle
forms are given on the hypersurface $\SigmaE$ where
$\SigmaE=\pi^{-1}\Set{\SigmaM}\subset\Ebunp$.

The Maxwell-Vlasov system is usually written in terms of the Maxwell
system in vacuo and {\it all} sources are contained in the total current
$J\in\Gamma T\Mp$. This in turn is given by the sum over each species
current
%[
\begin{align}
J=\sum_\speciesa J^\speciesa
\label{Plasma_Jf}
\end{align}
%]
where $J^\speciesa\in\Gamma T \Mp$.  Thus in terms of $F$ and $J$ the
Maxwell subsystem is
%[
\begin{align}
dF=0\qquadand \epsilon_0 d\star F=-\star \dual{J}
\label{Plasma_MV_Ff}
\end{align}
%]

The dynamic equations for each $\theta^\speciesa$ can be written
succinctly in terms of forms on $\Ebunp$ and a collection of Liouville
vector fields $W^\speciesa \in \Gamma T\Ebunp$ describing the flow of
the charged particles associated with each species $[\alpha]$:
%[
\begin{align}
W^\speciesa|_{(z,w)}= \Horiz_{(z,w)}(z,w)+
\frac{q^\speciesa}{m^\speciesa} \Vert_{(z,w)}(\dual{i_{(z,w)}F})
\label{Plasma_MV_W}
\end{align}
%]
in terms of certain horizontal and vertical
lifts\cite{yano1973tangent}.  With these vector fields the
distribution forms $\theta^\speciesa$ are defined to satisfy the
collisionless conditions:
%[
\begin{align}
d \theta^\speciesa=0 \end{align}
%]
and
%[
\begin{align} i_{W^\speciesa} \theta^\speciesa = 0
\label{Plasma_MV_theta}
\end{align}
%]
To close this system one requires:
%[
\begin{align}
\star\dual{J^\speciesa}= q^\speciesa\mapint_{\pi} \theta^\speciesa
\label{Plasma_Jf_sp}
\end{align}
%]
The closure of $\theta^\speciesa$ leads, from
(\ref{Notation_mapint_comm_d}), to the continuity equation for each
species current:
%[
\begin{align}
d\star\dual J^\speciesa = d\Big( \mapint_{\pi} \theta^\speciesa \Big)
= \mapint_{\pi} d \theta^\speciesa = 0
\label{Plasma_closed_current}
\end{align}
%]
so the total current 3-form $\star\dual J$ is closed away from the
boundary $\SigmaM$.

A local coordinate system $(z^0,\ldots,z^3)$ for a region containing
$z$ on $\Mp$ induces a local coordinate system
$(z^0,\ldots,z^3,w^1,w^2,w^3)$ on $\Ebunp$.  Since $\Ebunp\subset T \Mp$
the tangent vector for a generic element $(z,w)\in\Ebunp$ may be
written
%[
\begin{align*}
(z,w)=w^a\pfrac{}{z^a}\bigg|_z\in \Ebun^+_z\subset T_z \Mp
\end{align*}
%]
where $\Ebun^+_z=\pi^{-1}(\Set{z})$ is the 3-dimensional fibre of
$\Ebunp$ over $z$ coordinated by $(w^1,w^2,w^3)$ and $w^0(z,w)$ is the
solution to $g_{ab}w^a w^b=-1$ with $w^0>0$.  All indices in the range
$0,1,2,3$ are raised and lowered using $g^{ab}$ and $g_{ab}$ so that
$w_0=w^a g_{a0}$.  Given a pair of vectors $(z,w),(z,v)\in
\Ebun^+_z\subset T_z\Mp$ the
horizontal lift of the vector $(z,v)$ to the point $(z,w)\in\Ebunp$
will be denoted $\Horiz_{(z,w)}(z,v)\in T_{(z,w)}\Ebunp$ and is given
by
%[
\begin{align}
\Horiz_{(z,w)}(z,v) = \Big(v^a\pfrac{}{z^a}-\Gamma^\inu{}_{ef}(z)w^e
v^f \pfrac{}{w^\inu} \Big)\Big|_{(z,w)}
\label{Plasma_Horiz}
\end{align}
%]
where $\Gamma^a{}_{ef}$ are the Christoffel symbols determined by the
metric components $g^{ab}$. Furthermore if $g(v,w)=0$ then the
vertical lift of the vector $(z,v)$ to the point $(z,w)\in\Ebunp$ is
given by
%[
\begin{align}
\Vert_{(z,w)}(z,v) = \Big(v^\mu \pfrac{}{w^\mu}\Big)\Big|_{(z,w)} \in
T_{(z,w)}\Ebunp
\label{Plasma_Vert}
\end{align}
%]
Thus from (\ref{Plasma_MV_W}), each Liouville vector field in these
coordinates can be expressed as
%[
\begin{equation}
\begin{aligned}
W^\speciesa|_{(z,w)}&=
w^a\pfrac{}{z^a}+
%\\&
\Big(-\Gamma^\inu{}_{ef}(z)w^e
w^f
+ \frac{q^\speciesa}{m^\speciesa}F_{ef}(z) g^{\inu e}
w^f\Big)\pfrac{}{w^\inu}
\end{aligned}
\label{Plasma_MV_W_coords}
\end{equation}
%]
Denote by $\Omega\in\Gamma\Lambda^7\Ebunp$ the natural 7-form measure
on $\Ebunp$ given in these coordinates by
%[
\begin{align}
\Omega=\frac{\detg}{w_0}dz^{0123}\wedge dw^{123}
\label{Plasma_Omega}
\end{align}
%]
In ref. \onlinecite[eqn. (94)]{1971grc} it is shown that for all species
$\speciesabig$
%[
\begin{align}
d i_{W^\speciesa} \Omega =0
\label{d_iW_Omega}
\end{align}
%]
The {\it{distribution function}} $f^\speciesa\in\Gamma\Lambda^0\Ebunp$
relative to $\Omega$ for the species $\speciesabig$ is defined
implicitly via
%[
\begin{align}
\theta^\speciesa=i_{W^\speciesa}(f^\speciesa\Omega)
\label{Plasma_f}
\end{align}
%]
From (\ref{d_iW_Omega}, \ref{Plasma_f}) it follows that
(\ref{Plasma_MV_theta}) is equivalent to
%[
\begin{align}
W^\speciesa(f^\speciesa)=0\,,
\label{Plasma_MV_f}
\end{align}
%]
and from (\ref{Plasma_Jf_sp}) the components of the species current
$\speciesabig$ are given in terms of $f^\speciesa(z,w)$ by
%[
\begin{align}
J^\speciesa{}^b(z)= q^\speciesa\int_{\Ebun_z^+}
\frac{w^b|(\det g)(z)|^{1/2}}{w_0(z,w)} f^\speciesa(z,w) dw^{123}
\label{Plasma_Jf_sp_coords}
\end{align}
%]

\subsection{Perturbation analysis}

Let $\theta^\speciesa_1\in\Gamma\Lambda^6\Ebunp$ and $F_1\in\Gamma\Lambda^2 \Mp$
be  perturbations of $\theta^\speciesa_0$ and $F_0$, i.e.
%[
\begin{align}
&\theta^\speciesa=\theta^\speciesa_0+\theta^\speciesa_1+\ldots
\quadand
F=F_0+ F_1+\ldots
\label{Plasma_MV_1}
\end{align}
%]
where
\begin{equation}
\begin{aligned}
&d \theta^\speciesa_0=0 \,,\qquad
i_{W^\speciesa_0} \theta^\speciesa_0 = 0\,,\qquad
\\&
dF_0=0 \,,\qquad
\epsilon_0 d\star F_0=
-\sum_\speciesa q^\speciesa \mapint_{\pi} \theta^\speciesa_0
\end{aligned}
\label{Plasma_MV0}
\end{equation}
%]
 and
%[
\begin{align}
W^\speciesa_0|_{(z,w)}=
\Horiz_{(z,w)}(z,w) +
\frac{q^\speciesa}{m^\speciesa}\Vert_{(z,w)}\big(\dual{i_{(z,w)}F_0}\big)
\label{Plasma_MV_W0}
\end{align}
%]
i.e. given by substituting $F=F_0$ into (\ref{Plasma_MV_W_coords}).
Substituting $F$ into (\ref{Plasma_MV_W}) yields
$W^\speciesa=W^\speciesa_0+ W^\speciesa_1+\ldots$ where $W^\speciesa_1=\What^\speciesa_1(F_1)$
and the map
$\What_1:\Gamma\Lambda^2 \Mp\to\Gamma T\Ebunp$
is given by
%[
\begin{equation}
\What^\speciesa_1(F_1)|_{(z,w)}=\frac{q^\speciesa}{m^\speciesa}
\Vert_{(z,w)}\big(\dual{i_{(z,w)}F_1}\big)
\label{Plasma_MV_W1}
\end{equation}
%]
The first order linear system for the perturbation
$(\theta_1,F_1)$ is then
%[
\begin{align}
d \theta^\speciesa_1&=0 \,,
\label{Plasma_MV_d_theta1}
\\
i_{W^\speciesa_0} \theta^\speciesa_1
&= - i_{\What^\speciesa_1(F_1)} \theta^\speciesa_0\,,
\label{Plasma_MV_iW_theta1}
\\
dF_1&=0\,,
\label{Plasma_MV_dF1}
\\
\epsilon_0 d\star F_1&= -\sum_\speciesa q^\speciesa
\mapint_{\pi} \theta^\speciesa_1
\label{Plasma_MV_dstarF1}
\end{align}
%]
Using (\ref{Notation_mapint_comm_d}) and (\ref{Plasma_MV_d_theta1}) it
follows that each species current in the sum on the right hand side of
(\ref{Plasma_MV_dstarF1}) is closed away from the initial hypersurface
$\SigmaM$.  In terms of the excitation field
$G_1\in\Gamma\Lambda^2\Mp$ equation (\ref{Plasma_MV_dstarF1}) will be
written
%[
\begin{align}
d\star G_1 = 0
\label{Plasma_MV_dstarG1}
\end{align}
%]
where
%[
\begin{align}
G_1 = \epsilon_0 F_1 + \Pi_1[F_1,\zetabd_1]
\label{Plasma_MV_G1}
\end{align}
%]
for some linear functional $\Pi_1$ of $F_1$ and $\zetabd$ such that
%[
\begin{align}
d\star \Pi_1[F_1,\zetabd_1]= -\sum_\speciesa \mapint_{\pi}
\theta^\speciesa_1
\label{Plasma_Pi1}
\end{align}
%]
and
$\zetabd_1=\Set{\zeta^{\speciesas{1}}_1,\zeta^{\speciesas{2}}_1,\ldots}$
where
$\zeta^\speciesa_1=\xi^\speciesa_1|_{\SigmaEY}$
for some $\xi^\speciesa_1\in\Gamma\Lambda^5\EbunYp$ which solves
$\theta^\speciesa_1=d\xi^\speciesa_1$. Thus
$\zeta^\speciesa_1$ is related to the initial velocity profile of the
species $\speciesabig$.

In the next section \ref{ch_SusKernel} the general susceptibility kernel
$\chi\in\Gamma\Lambda^0(\MXp\times\MYp)$ and linear
functional $Z_1$, determined by $\theta^\speciesa_0$ and
$F_0$, are found such that
%[
\begin{align}
\Pi_1[F_1,\zetabd_1]|_x=\mapint_{\PiX}
\chi\wedge \PiY^\star(F_1)
+Z_1[\zetabd_1]
\label{Plasma_susep}
\end{align}
%]
satisfies (\ref{Plasma_Pi1}).

\subsection{A general formula for the functional $\Pi_1$ in an
  unbounded plasma}
\label{ch_SusKernel}

\begin{figure}
\setlength{\unitlength}{0.09\textwidth}
\begin{picture}(5,2)
\put(0.1,0.1){\includegraphics[width=5\unitlength,viewport=0 0 507 207]{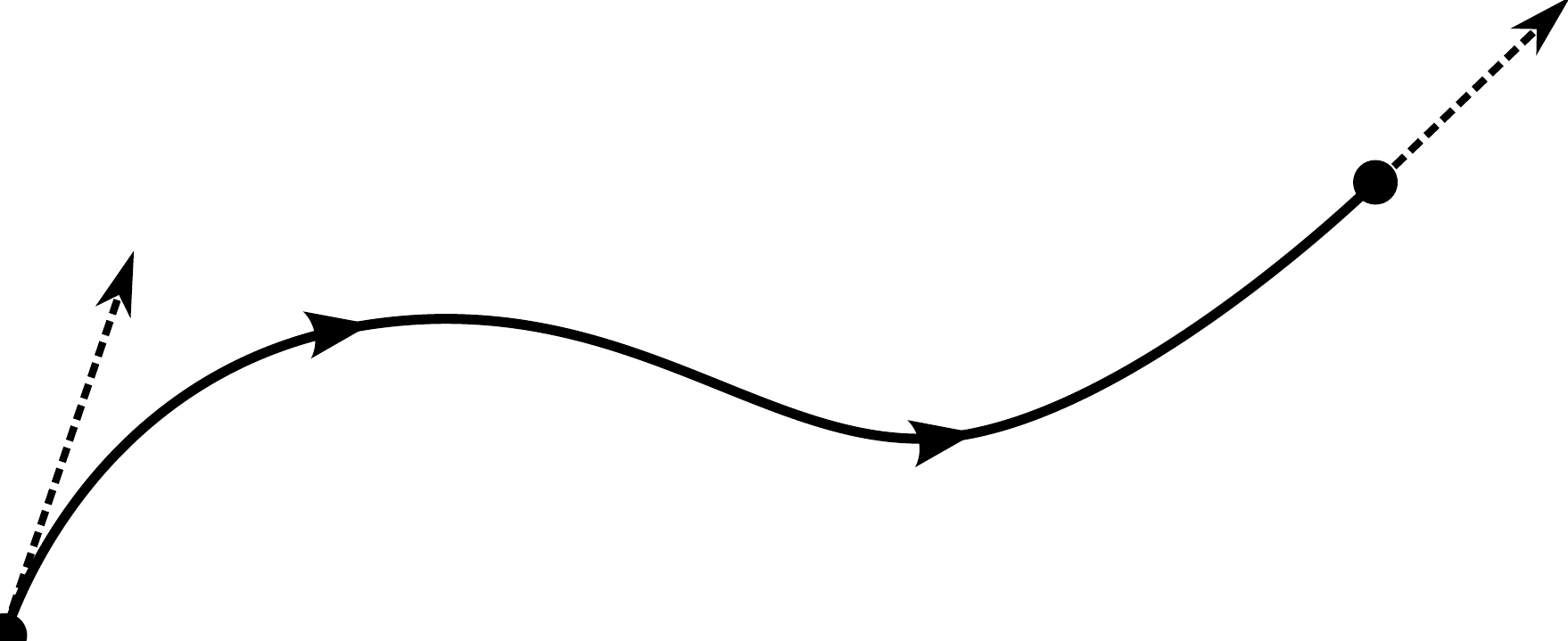}}
\put(0,-0.1){$y$}
\put(0.2,1){$u$}
\put(2.5,1){$\tau$}
\put(4.5,1.3){$x$}
\put(5.0,2.2){$v$}
\end{picture}
\caption{{A segment of the solution curve $C_{(x,v)}$ to the
    unperturbed Lorentz force equation (\ref{Plasma_Lorentz_Force})
    with final position $x$, final velocity $(x,v)$, initial position
    $y=C_{(x,v)}(\tau)$ and initial velocity $(y,u)=\Cd_{(x,v)}(\tau)$.}}
\label{fig_world_line}
\end{figure}

In this section a general expression for a susceptibility kernel will
be constructed in terms of the integral curves of the vector field
$W^{[\alpha]}_0\in\Gamma T\Ebunp$. Such curves describe segments of
particle world lines under the influence of the Lorentz force due to
the external electromagnetic field $F_0$.  Although, for a general
$F_0$, it is not possible to derive an analytic form for such integral
curves, special cases are amenable to an analytic analysis.

It proves convenient to let the final and initial states of each
species of particle reside in fibres over $\MXp$ and $\MYp$
respectively, bounded by the equivalent hypersurfaces
$\SigmaMX\subset\MXp$ and $\SigmaMY\subset\MYp$.  Thus the
corresponding upper unit hyperboloid bundles $\piX:\EbunXp\to\MXp$ and
$\piY:\EbunYp\to\MYp$ with boundary hypersurfaces
$\SigmaEX\subset\EbunXp$ and $\SigmaEY\subset\EbunYp$ are used to
accommodate the final and initial 4-velocities of the particles.
The generic
elements of these bundles are written $(x,v)\in\EbunXp$ and
$(y,u)\in\EbunYp$ where $x\in\MXp$, $y\in\MYp$ and $g(v,v)=g(u,u)=-1$.
The induced coordinate systems for $\EbunXp$ and $\EbunYp$ are
$(x^0,\ldots,x^3,v^1,v^2,v^3)$ and $(y^0,\ldots,y^3,u^1,u^2,u^3)$. Let
$v^0(x,v)$, $v_0(x,v)$, $u^0(y,u)$ and $u_0(y,u)$ be defined in the
same way as $w^0(z,w)$ and $w_0(z,w)$.

The contribution to the tensor $\Pi_1[F_1,\zetabd_1]$ due to all
dynamic sources,
arises from all particle histories in the past light cone of
$x\in\MXp$. The history of the species particle $\speciesabig$ which
passes through event $x$ with 4-velocity $(x,v)\in\EbunXp$ will
therefore be parametrized by negative proper time $\tau$:
$C^\speciesa_{(x,v)}:[\tau^\speciesa_0(x,v),0]\to\Mp$,
$\tau\mapsto C^\speciesa_{(x,v)}(\tau)$.
Such a history is the unique solution to the Lorentz force equation
%[
\begin{equation}
\nabla_{\Cd^\speciesa_{(x,v)}}\Cd^\speciesa_{(x,v)}=
\frac{q^\speciesa}{m^\speciesa}\big(\dual{i_{\Cd^\speciesa_{(x,v)}} F_0}\big)
\label{Plasma_Lorentz_Force}
\end{equation}
%]
with
%[
\begin{equation}
g(\Cd^\speciesa_{(x,v)},\Cd^\speciesa_{(x,v)})=-1
\label{Plasma_Lorentz_Force_norm}
\end{equation}
%]
and final condition
%[
\begin{equation}
C^\speciesa_{(x,v)}(0)=x\,,\
\Cd^\speciesa_{(x,v)}(0)=(x,v)
\label{Plasma_Lorentz_Force_fin}
\end{equation}
%]
where
$\Cd^\speciesa_{(x,v)}(\tau)=C^\speciesa_{(x,v)\star}(\partial_\tau|_\tau)
=\Cd^{\speciesa a}_{(x,v)}(\tau)
\pfrac{}{x^a}$ and the value $\tau^\speciesa_0(x,v)\le0$ solves
%[
\begin{align}
C^\speciesa_{(x,v)}\big(\tau^\speciesa_0(x,v)\big)\in\SigmaMY
\label{Plasma_Lorentz_Force_Sigma}
\end{align}
%]
This defines the prolongation of $C$,
$\Cd^\speciesa_{(x,v)}:[\tau^\speciesa_0(x,v),0]\to\Ebunp$.
For each species $\speciesabig$, $(x,v)\in\EbunXp$
and $\tau\in[\tau^\speciesa_0(x,v),0]$ let
$(y,u)\in\EbunYp$ denote the initial state, i.e.
$y=C^\speciesa_{(x,v)}(\tau)$ and $(y,u)=\Cd^\speciesa_{(x,v)}(\tau)$,
see figure \ref{fig_world_line}.

The {\it family} of {\it all} such histories is described in terms of
the maps
%[
\begin{align}
\phi^\speciesa:\Nbun^\speciesa_X\to\EbunYp
\,,\qquad
\phi^\speciesa(\tau,x,v)=\Cd^\speciesa_{(x,v)}(\tau)
\label{Plasma_def_phi}
\end{align}
%]
where
%[
\begin{align*}
\Nbun^\speciesa_X=\Set{(\tau,x,v)\in\Real^-\times\EbunXp\,\big|\,
\tau^\speciesa_0(x,v)\le\tau\le0}
\end{align*}
%]
The manifold $\Nbun^\speciesa_X$ with boundary  is naturally a fibre bundle
over $\EbunXp$ with projection
$\ppX^\speciesa:\Nbun^\speciesa_X\to\EbunXp$,
$(\tau,x,v)\mapsto\ppX^\speciesa(\tau,x,v)=(x,v)$ and for any form
$\alpha\in\Gamma\Lambda^p\Nbun_X$ it follows from
(\ref{Disp_def_fibre_int}) that
%[
\begin{align*}
\mapint_{\ppX^\speciesa}\alpha =
dx^I \wedge dy^J\int^0_{\tau^\speciesa_0(x,v)} \alpha^{(1)}(\tau,x,v) d\tau
\end{align*}
%]
where $\alpha=\alpha^{(1)}(\tau,x,v)
dx^I \wedge dy^J\wedge d\tau+\alpha^{(2)}(\tau,x,v)
dx^I \wedge dy^J$.

Let $\Gamma\Lambda^5_{\SigmaEY}\EbunYp$ be the set of sections over
$\SigmaEY$ with values in $\Lambda^5\EbunYp$,
i.e. if $\alpha\in\Gamma\Lambda^5_{\SigmaEY}\EbunYp$ then
for each
$(y,u)\in\SigmaEY$,
$\alpha|_{(y,u)}\in\Lambda^5_{(y,u)}\EbunYp$.
Let the map
$\varphi^\speciesa:\Gamma\Lambda^5_{\SigmaEY}\EbunYp
\to\Gamma\Lambda^5\EbunXp$ be given by
%[
\begin{align}
\varphi^\speciesa(\alpha)|_{(x,v)}=
\phi^{\speciesa\star}_{\tau^\speciesa_0(x,v)}(\alpha|_{\tau^\speciesa_0(x,v)})
\in\Lambda^5_{(x,v)}\EbunXp
\label{SusKernel_def_varphi}
\end{align}
%]
where
$\phi^{\speciesa}_{\tau}:\EbunXp\to\Ebunp$,
$\phi^{\speciesa}_{\tau}(x,v)=\phi(\tau,x,v)$.
For each species $\speciesabig$ let the initial data be given by
$\zeta^\speciesa_{1}\in\Gamma\Lambda^5_{\SigmaEY}\EbunYp$ with
$i_{W_0^\speciesa}\zeta^\speciesa_{1}=0$.

In terms of these maps, it will now be shown that the general polarization
functional $\Pi_1$ on $\MXp$ is given by
%[
\begin{equation}
\begin{aligned}
\Pi_1[F_1,\zetabd_1]
&=
%\\&
\sum_\speciesa {q^{\speciesa}}
\star \mapint_{\piX}\mapint_{\ppX^\speciesa} d\tau \wedge
\phi^{\speciesa\star} (i_{\What^\speciesa_1(F_1)} \theta^\speciesa_0)
+
\star d\big( \Xi_1[F_1]\big)
\\&\quad
+
\sum_\speciesa {q^{\speciesa}}
\star \mapint_{\piX} \varphi^\speciesa(\zeta^\speciesa_{1})
+
\star d\big( \Zonez[\zetabd_1]\big)
\end{aligned}
\label{Plasma_Pi_res}
\end{equation}
%]
where $\Xi_1$ and $\Zonez$ are arbitrary linear functionals of $F_1$ and
$\zetabd_1$  respectively. The excitation
$\Pi_1[F_1,\zetabd_1]$, in (\ref{Plasma_Pi_res}), is the general solution to
(\ref{Plasma_Pi1}) where the source $\theta_1$ satisfies
{(\ref{Plasma_MV_d_theta1},\ref{Plasma_MV_iW_theta1})}.  The first two
terms on the right hand side of (\ref{Plasma_Pi_res}) are linear
functionals of $F_1$ whereas the last term is a linear functional of
the initial data $\zetabd_1$.  Clearly $\star
d\big(\Xi_1[F_1]\big)$ and $\star
d\big(\Zonez[\zetabd_1]\big)$ are in the kernel of $d
\star$, the homogeneous differential operator associated with
(\ref{Plasma_Pi1}).

The proof that (\ref{Plasma_Pi_res}) solves
(\ref{Plasma_Pi1}) requires the following lemma which is proved in the
appendix.

\begin{lemma}
\label{lm_int_chac_curves}
Let $N$ be a manifold with a boundary $\SigmaN\subset N$ and
let $V\in\Gamma TN$ be a non-vanishing vector field on $N$ such that
every integral curve of $V$ intersects $\SigmaN$ precisely once.
For each $\sigma\in N$ let the integral curve of $V$ terminating
at $\sigma$ be given by $\gamma_\sigma:[\tau_0(\sigma),0]\to N$
where $\gamma_\sigma(0)=\sigma$ and $\gamma_\sigma(\tau_0(\sigma))\in\SigmaN$.
The set $\Nbun=\Set{(\sigma,\tau)\subset\Real^-\times
  N\,\big|\,\tau_{\min}(\sigma)\le\tau\le 0}$ is a fibred manifold over $N$ with
projection $\varpi_N:\Nbun\to N$,
$(\tau,\sigma)\mapsto\varpi_N(\tau,\sigma)=\sigma$. The family of
integral curves of $V$ can be described by the map
$\phi_N:\Nbun\to N$,
$\phi_N(\tau,\sigma)=\gamma_\sigma(\tau)$.
Let $\zeta\in\Gamma\Lambda^p_{\SigmaN}N$ such that $i_V\zeta=0$,
i.e. $\zeta$ is a $p$-form on $\SigmaN$ with values in $\Lambda^p
N$. Let $\varphi_{N}:\Gamma\Lambda^p_{\SigmaN}N\to\Gamma\Lambda^pN$
be given by
$\varphi_{N}(\zeta)|_\sigma=
\phi^\star_{N\,\tau_0(\sigma)}(\zeta|_{\tau_0(\sigma)})\in\Lambda^p_{\SigmaN} N$.

If $\beta\in\Gamma\Lambda^p N$ is a $p$-form on $N$ with compact
support such that $i_V\beta=0$ and $\xi\in\Gamma\Lambda^p N$ has the
form
%[
\begin{align}
\xi
=
\mapint_{\varpi_N} \phi_N^{\star}(\beta) \wedge d\tau
+
\varphi_{N}(\zeta)
\label{thepde_alpha}
\end{align}
%]
then
%[
\begin{align}
i_V d \xi = \beta
\label{thepde_iVd_alpha}
\end{align}
%]
and $\xi|_{\SigmaN}=\zeta$.
\end{lemma}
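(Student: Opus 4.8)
The plan is to verify the two conclusions of Lemma~\ref{lm_int_chac_curves} separately, working in the fibred-manifold language already developed in the excerpt. The boundary condition $\xi|_{\SigmaN}=\zeta$ is the easy part: restricting (\ref{thepde_alpha}) to $\SigmaN$, the fibre-integral term $\mapint_{\varpi_N}\phi_N^\star(\beta)\wedge d\tau$ restricts to a fibre over $\sigma\in\SigmaN$ which is the single-point interval $[\tau_0(\sigma),0]=\{0\}$ (since $\tau_0(\sigma)=0$ for $\sigma\in\SigmaN$, every integral curve through a boundary point terminating at itself having zero length), so that term vanishes on $\SigmaN$; meanwhile $\varphi_N(\zeta)|_\sigma = \phi^\star_{N\,\tau_0(\sigma)}(\zeta|_{\tau_0(\sigma)})=\phi^\star_{N\,0}(\zeta|_\sigma)=\zeta|_\sigma$ since $\phi_{N\,0}=\mathrm{id}_N$. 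That gives $\xi|_{\SigmaN}=\zeta$.

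For the main identity $i_V d\xi=\beta$, I would first show that $\varphi_N(\zeta)$ is annihilated by $i_V d$. The point is that $\varphi_N(\zeta)$ is, by construction, the pullback of $\zeta$ along the flow of $V$: it is the unique $p$-form on $N$ that restricts to $\zeta$ on $\SigmaN$ and satisfies $\mathcal{L}_V\varphi_N(\zeta)=0$, equivalently (using $i_V\zeta=0$, hence $i_V\varphi_N(\zeta)=0$ along the flow) $i_V d\varphi_N(\zeta)=0$ by Cartan's magic formula $\mathcal{L}_V = i_V d + d i_V$. So the second term in $\xi$ contributes nothing to $i_V d\xi$, and the whole burden falls on the fibre-integral term $\eta := \mapint_{\varpi_N}\phi_N^\star(\beta)\wedge d\tau$.

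The computation of $i_V d\eta$ is the crux. I would use that $d$ commutes with the fibre integral (\ref{Notation_mapint_comm_d}): $d\eta = \mapint_{\varpi_N} d\big(\phi_N^\star(\beta)\wedge d\tau\big) = \mapint_{\varpi_N}\big(d(\phi_N^\star\beta)\wedge d\tau\big)$ since $d(d\tau)=0$; and $d(\phi_N^\star\beta)=\phi_N^\star(d\beta)$. Splitting the exterior derivative on $\Nbun$ into its $\sigma$-part and its $\tau$-part, $\phi_N^\star(d\beta) = \partial_\tau(\phi_N^\star\beta)\wedge d\tau + (\text{terms without }d\tau)$, and only the piece carrying $d\tau$ survives the fibre integral. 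Now here is where $i_V\beta=0$ enters crucially: it forces $\phi_N^\star\beta$ to have no component involving the tangent to the curves, so after applying $i_V$ and integrating $\int_{\tau_0(\sigma)}^0 \partial_\tau(\phi_N^\star\beta)\,d\tau$ one gets, by the fundamental theorem of calculus, the boundary evaluation $(\phi_N^\star\beta)|_0 - (\phi_N^\star\beta)|_{\tau_0(\sigma)} = \beta|_\sigma - 0$, the second term vanishing because $\beta$ has compact support disjoint from where the endpoint maps land, or more robustly because $i_V\beta=0$ kills it after the contraction. Assembling, $i_V d\eta = \beta$, and combined with the vanishing $\varphi_N$ contribution this yields (\ref{thepde_iVd_alpha}).

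\emph{The main obstacle} I anticipate is bookkeeping the $d\tau$-components correctly through the fibre integral and through the contraction by $V$: one must be careful that $\phi_{N\star}(\partial_\tau) = \dot\gamma_\sigma(\tau)$ is precisely $V$ at the relevant point (this is exactly the statement that the curves are integral curves of $V$), so that $i_V$ acting on the base effectively becomes $i_{\partial_\tau}$ on $\Nbun$ under the pullback, and that the hypothesis $i_V\beta=0$ is what makes the non-$d\tau$ pieces of $\phi_N^\star\beta$ irrelevant and the lower endpoint term drop. The support hypothesis on $\beta$ is what legitimizes the use of (\ref{Notation_mapint_comm_d}) (no boundary contributions from $\partial\Nbun$ beyond the one we want) and guarantees the integrals converge; I would state this explicitly rather than leave it implicit.
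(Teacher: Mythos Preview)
Your treatment of the boundary condition $\xi|_{\SigmaN}=\zeta$ and of the term $\varphi_N(\zeta)$ (via $\mathcal L_V\varphi_N(\zeta)=0$ and Cartan's formula) is fine and matches the spirit of the paper, which simply reads these facts off in adapted coordinates.

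The gap is in your handling of $\eta=\mapint_{\varpi_N}\phi_N^\star(\beta)\wedge d\tau$. You invoke (\ref{Notation_mapint_comm_d}) to write $d\eta=\mapint_{\varpi_N}d(\phi_N^\star\beta\wedge d\tau)$, but the support hypothesis of that lemma fails here: the fibre over $\sigma$ is $[\tau_0(\sigma),0]$, and $\phi_N^\star\beta\wedge d\tau$ does not vanish on the boundary piece $\{\tau=\tau_0(\sigma)\}$ (where $\phi_N$ lands on $\SigmaN$) unless $\beta|_{\SigmaN}=0$, which is not assumed. There is therefore a genuine boundary correction to $d\eta-\mapint_{\varpi_N}d\alpha$, and after contracting with $V$ it equals precisely the lower-endpoint term $(\phi_N^\star\beta)|_{\tau=\tau_0(\sigma)}$ you later try to discard. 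Neither of your two reasons for discarding that term works: compact support of $\beta$ says nothing about its values on $\SigmaN$, and $i_V\beta=0$ has already been used (you do not get to contract with $V$ a second time). Your argument reaches the right answer only because these two errors cancel.

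The paper avoids this by not appealing to (\ref{Notation_mapint_comm_d}) at all: it passes to coordinates adapted to $V$ (so $V=\partial_{\sigma^1}$, $\phi_N(\tau,\sigma^1,\underline\sigma)=(\sigma^1+\tau,\underline\sigma)$), makes the substitution $\tau'=\tau+\sigma^1$ so that $\eta_I=\int_{\tau_0(\underline\sigma)}^{\sigma^1}\beta_I(\tau',\underline\sigma)\,d\tau'+\text{const}$, and then $i_Vd\eta=\partial_{\sigma^1}\eta_I\,d\underline\sigma^I=\beta$ directly by the fundamental theorem of calculus with variable \emph{upper} limit only. If you prefer a coordinate-free route, the clean fix is to use the Lie derivative on $\eta$ as well: since $i_V\Phi_\tau^\star\beta=\Phi_\tau^\star(i_V\beta)=0$ one has $i_V\eta=0$, hence $i_Vd\eta=\mathcal L_V\eta$; and $\Phi_s^\star\eta=\int_{\tau_0(\sigma)}^{s}\Phi_{\tau'}^\star\beta\,d\tau'$ (the lower limit is $s$-independent because $\tau_0(\Phi_s(\sigma))=\tau_0(\sigma)-s$), so differentiating at $s=0$ gives $\mathcal L_V\eta=\beta$.
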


This lemma is applied with $N=\EbunXp$,
$\varpi_N=\ppX^\speciesa$, $V=W^\speciesa_0$, $\tau_0=\tau^\speciesa_{0}$,
$\phi_N=\phi^\speciesa$, $\varphi_{N}=\varphi^\speciesa$,
$\zeta=\zeta^\speciesa_{1}$ and
%[
\begin{align}
\beta=-i_{\What^\speciesa_1(F_1)}\theta^\speciesa_0
\label{SusKernel_beta_replace}
\end{align}
%]
Thus $\xi$
in (\ref{thepde_alpha}) becomes the 5-form
$\xi^\speciesa_1\in\Gamma\Lambda^5\EbunXp$,
%[
\begin{align}
\xi^\speciesa_{1}
=
-\mapint_{\ppX^\speciesa}
\phi^{\speciesa\star}\big(i_{\What^\speciesa_1(F_1)}\theta^\speciesa_0\big)
\wedge d\tau
+\varphi^\speciesa(\zeta^\speciesa_{1})
=
\mapint_{\ppX^\speciesa} d\tau\wedge
\phi^{\speciesa\star}\big(i_{\What^\speciesa_1(F_1)}\theta^\speciesa_0\big)
+\varphi^\speciesa(\zeta^\speciesa_{1})
\label{SusKernel_def_zeta1}
\end{align}
%]
since
$\deg\big(\phi^{\speciesa\star}(i_{\What^\speciesa_1(F_1)}
\theta^\speciesa_0)\big)=5$.
In order to satisfy (\ref{Plasma_MV_d_theta1}) let
%[
\begin{align}
\theta_1^\speciesa=d\xi^\speciesa_{1}
\label{SusKernel_res_theta1}
\end{align}
%]
Furthermore from (\ref{thepde_iVd_alpha}) and (\ref{SusKernel_beta_replace})
%[
\begin{align*}
i_{W^\speciesa_0}\theta_1^\speciesa
=
i_{W^\speciesa_0}d\xi^\speciesa_{1}
=
-i_{\What^\speciesa_1(F_1)}\theta^\speciesa_0
\end{align*}
%]
so (\ref{Plasma_MV_iW_theta1}) is satisfied. In terms of
$\xi^\speciesa_{1}$ (\ref{Plasma_Pi_res}) can be written
%[
\begin{align*}
&\Pi_1[F_1,\zetabd_1]|_x
%\\&\quad
=
\sum_\speciesa {q^{\speciesa}}
\star \mapint_{\piX}\xi^\speciesa_{1}
+
\star d\big( \Xi_1[F_1]\big)|_x
+
\star d\big( \Zonez[\zetabd_1]\big)
\end{align*}
%]
Then from (\ref{Notation_mapint_comm_d})
%[
\begin{align*}
d\star\Pi_1[F_1,\zetabd_1]
&=
 d\star\star\Big(\sum_\speciesa q^\speciesa
\mapint_{\piX}  \xi^\speciesa_{1}\Big)
\\&=
-\sum_\speciesa q^\speciesa
d \mapint_{\piX} \xi^\speciesa_{1}
=
-\sum_\speciesa q^\speciesa \mapint_{\piX} d\xi^\speciesa_{1}
\\&=
- \sum_\speciesa q^\speciesa \mapint_{\piX} \theta^\speciesa_1
\end{align*}
%]
Thus the Maxwell equation (\ref{Plasma_Pi1}) is also satisfied. That
(\ref{Plasma_Pi_res}) is the general solution to (\ref{Plasma_Pi1})
follows from the fact that the difference between any two solutions of
(\ref{Plasma_Pi1}) satisfies the homogeneous differential equation
associated with (\ref{Plasma_Pi1}).

Thus we have succeeded in eliminating $\theta_1^\speciesa$ from the
perturbation system
(\ref{Plasma_MV_d_theta1}-\ref{Plasma_MV_dstarF1}), thereby reducing
the system to $dF_1=0$ and
%[
\begin{equation}
\begin{aligned}
&\epsilon_0 d\star F_1
+
\tfrac12
\sum_\speciesa q^\speciesa
d
\mapint_{\pi_X}\mapint_{\ppX^\speciesa} d\tau \wedge
\phi^{\speciesa\star} (i_{\What^\speciesa_1(F_1)}
\theta^\speciesa_0)
%\\&\qquad
+
\sum_\speciesa {q^{\speciesa}}
d \mapint_{\piX} \varphi^\speciesa(\zeta^\speciesa_{1})
=0
\end{aligned}
\label{Plasma_disp_eqn}
\end{equation}
%]
in terms of $(\theta_0,F_0)$, for the perturbation $F_1$. The
perturbation $\theta_1$ is then given by
(\ref{SusKernel_res_theta1},\ref{SusKernel_def_zeta1}).

%%%%%%%%%%%%%%%%%%%%%%%%%%%%%%%%%%%%%%%%%%%%%%%%%%%%%%%%%%%%%%%%%%%%%%

\subsection{The susceptibility kernel for an unbounded collisionless plasma}

Equating (\ref{Plasma_Pi_res}) and (\ref{Plasma_susep})
with the initial data
%[
\begin{align}
Z_1[\zetabd_1]=\sum_\speciesa {q^{\speciesa}}
\star \mapint_{\piX} \varphi^\speciesa(\zeta^\speciesa_{1})
+
\star d\big( \Zonez[\zetabd_1]\big)
\label{SusKernel_Z1_soln}
\end{align}
%]
yields
%[
\begin{equation}
\begin{aligned}
&\mapint_{\PiX}
\chi\wedge \PiY^\star(F_1)
%\\&\quad
=
\sum_\speciesa {q^{\speciesa}}
\star \mapint_{\piX}\mapint_{\ppX^\speciesa} d\tau \wedge
\phi^{\speciesa\star} (i_{\What^\speciesa_1(F_1)} \theta^\speciesa_0)
%\\&\qquad\qquad
+
\star d\big( \Xi_1[F_1]\big)
\end{aligned}
\label{SusKernel_Pi1_eq_xi}
\end{equation}
%]
Away from the initial hypersurface boundary
$\partial(\MXp\times\MYp)=\SigmaMX\times\MYp\union\MXp\times\SigmaMY$, using
(\ref{Notation_mapint_comm_d}) and (\ref{Hodge_int_commute}) one has
%[
\begin{align*}
&\mapint_{\PiX} \star_X d_X\check\xi \wedge \PiY^\star(F_1)
=
\mapint_{\PiX} \star_X d\check\xi \wedge \PiY^\star(F_1)
%\\&\qquad\qquad
=
\star d\mapint_{\PiX} \check\xi \wedge \PiY^\star(F_1)
=
\star d\big(\check\Xi_1[F_1]\big)
\end{align*}
%]
where $\check\Xi_1 [F_1]$ is a linear functional of $F_1$.  The
gauge freedom $\chi\to\star_X d_X\check\xi$ given in
(\ref{Disp_gauge_dstardYchi}) is equivalent to the addition of the
term $\star d\big( \Xi_1[F_1]\big)$ in (\ref{Plasma_Pi_res}).

If $F_1$ is restricted to have support in a certain
domain one may find $\chi$ such that
%[
\begin{equation}
\begin{aligned}
&\mapint_{\PiX}
\chi\wedge \PiY^\star(F_1)
%\\&\qquad
=
\sum_\speciesa {q^{\speciesa}}
\star \mapint_{\piX}\mapint_{\ppX^\speciesa} d\tau \wedge
\phi^{\speciesa\star} (i_{\What^\speciesa_1(F_1)} \theta^\speciesa_0)
\end{aligned}
\label{SusKernel_Pi1_eq}
\end{equation}
%]
To find such a susceptibility kernel requires the following maps.

For $(y,u)\in\EbunYp$, let $C^\speciesa_{(y,u)}:\Real^+\to\Mp$ and
$\Cd^\speciesa_{(y,u)}:[0,\tau^\speciesa_1(y,u)\big)\to\Ebunp$ be the unique
solutions to the unperturbed Lorentz force equation
(\ref{Plasma_Lorentz_Force},\ref{Plasma_Lorentz_Force_norm}) with
initial conditions
%[
\begin{equation}
C^\speciesa_{(y,u)}(0)=y
\qquadand
\Cd^\speciesa_{(y,u)}(0)=(y,u)
\label{Plasma_Lorentz_Force_init}
\end{equation}
%]
where $\tau^\speciesa_1(y,u)\in\Real^+\union\Set{\infty}$ is the
supremum of the values of $\tau$ such that
$C^\speciesa_{(y,u)}(\tau)\in M$. Let
$\Phi^\speciesa:\Nbun^\speciesa_Y\to \MXp\times \MYp$,
%[
\begin{align}
\Phi^\speciesa(\tau,y,u)=\big(C^\speciesa_{(y,u)}(\tau),y\big)
\label{Plasma_def_varphi}
\end{align}
%]
where
%[
\begin{align*}
\Nbun^\speciesa_Y=\Set{(\tau,y,u)\in\Real^+\times\EbunXp\,\big|\,
0\le\tau<\tau^\speciesa_1(y,u)}
\end{align*}
%]
This map gives the final and initial positions of a solution to
the unperturbed Lorentz force equation in terms of the
initial position, velocity and proper time parameter
$\tau\in[0,\tau^\speciesa_1(y,u)\big)$.

Observe that $\Phi^\speciesa$ is never surjective, since if
$\Phi^\speciesa(\tau,y,u)=(x,y)$ then $x\in J^+(y)$. Also
$\Phi^\speciesa$ is never injective since $\Phi^\speciesa(0,y,u)=(y,y)$
for all $(y,u)\in\EbunYp$.  Thus $\Phi^\speciesa$ does not possess an inverse
and one must work locally on $\MXp\times\MYp$ in order to establish the
diffeomorphism $\Psi^\speciesa:\DomPsi \to \DomPsi'$,
%[
\begin{align}
\Psi^\speciesa=\left(\Phi^\speciesa|_{\DomPsi'}\right)^{-1}
\label{Plasma_def_psi}
\end{align}
%]
i.e.
%[
\begin{align*}
\Psi^\speciesa\big(C_{(y,u)}(\tau),y\big)=(\tau,y,u)
\end{align*}
%]
with $\DomPsi\subset\MXp\times\MYp$ and
$\DomPsi'\subset\Nbun^\speciesa_Y$ given by
%[
\begin{align}
\DomPsi
&=\Set{(x,y)\,\bigg|\,
\text{\parbox{0.84\textwidth}{There exists a unique $u\in\Ebun_y$ and
    $\tau\in\Real^+$ such that $C^\speciesa_{(y,u)}(\tau)=x$ for all
    $\speciesabig$}}}
\label{Plamsa_def_D}
\end{align}
%]
and
%[
\begin{align*}
\DomPsi'&=
\Set{(\tau,y,u)\,\Big|\,\Phi^\speciesa(\tau,y,u)\in\DomPsi
  \text{ for all $\speciesabig$}}
\end{align*}
%]
This map $\Psi^\speciesa$ encodes the solution to the two-point
problem, namely given an initial event $y\in\MY$ and final event
$x\in\MX$ find the unique worldline to the unperturbed Lorentz force
equation which passes though these two points. This worldline is
specified by its initial velocity $(y,u)\in\EbunXp$ and its proper time
$\tau$. The statement that $\Phi^\speciesa$ does not have an inverse
is equivalent to the statement that in general there may not be a
unique solution to the two point problem on an arbitrary
domain.  The domain $\DomPsi$ is the set of all pairs $(x,y)$
such that there is a unique worldline.

Set
%[
\begin{align}
\chi=\sum_\speciesa \chi^\speciesa
\label{Plasma_chi_sum_species}
\end{align}
%]
where
%[
\begin{equation}
\begin{aligned}
&\chi^\speciesa|_{(x,y)}=
\tfrac1{2} \frac{q^{\speciesa\,2}}{m^\speciesa}\hashx
dy^{cd}
\wedge\IC{y}{abcd}
\Psi^{\speciesa\star} \Big( d\tau \wedge
\varpi_Y^{\speciesa\star}
\big(g^{\nu a} u^b \IC{u}{\nu} \theta^\speciesa_0\big)\Big)\Big|_{(x,y)}
\end{aligned}
\label{Plasma_res_chi}
\end{equation}
%]
for points $(x,y)\in\DomPsi$.
In the appendix (lemma \ref{lm_chi_nocoords}) it is shown that
given $x\in\MXp$
(\ref{SusKernel_Pi1_eq}) and $F_1$ with support in
%[
\begin{align}
\DomPsi_x=\DomPsi\inter\PiX^{-1}\Set{x}= \Set{y\in\MY|(x,y)\in\DomPsi}
\label{SusKernel_def_DomPsi_X}
\end{align}
%]
then (\ref{SusKernel_Pi1_eq}) holds at $x$.
Furthermore although
$(d_Y \chi)|_{(x,y)}$ is unique, $\chi$ has the gauge
freedom given by (\ref{Disp_gauge_dYchi}).

One may write (\ref{Plasma_res_chi})
implicitly as
%[
\begin{align}
\chi^\speciesa\wedge\PiY^\star \gamma
=
-q^\speciesa
\hashx S \Psi^{\speciesa\star} \Big( d\tau \wedge
\ppY^{\speciesa\star} (i_{\What^\speciesa_1(\gamma)} \theta^\speciesa_0)\Big)
\label{Plasma_res_chi_imp}
\end{align}
%]
for all
$\gamma\in\Gamma\Lambda^2\MYp$
where
$S:\Lambda^6_{(x,y)}(\MXp\times\MYp)\to \Lambda^6_{(x,y)}(\MXp\times\MYp)$,
%[
\begin{align}
S(\alpha)=\IC{y}{0123}\alpha\wedge dy^{0123}
\label{Plasma_def_S}
\end{align}
%]
The tensor projector $S$ has the simplest representation in the
coordinate basis employed here since $\IC{y}{a}dy^b=\delta^b_a$.

From (\ref{Plasma_def_psi}) for a chosen species $\speciesabig$ one
must consider $\tau$ and $u$ to be functions of $(x,y)$ as well as the
species label $\speciesabig$. Thus let $\Psi^\speciesa$ be given by
the functions $\tau=\tau(x,y)$ and $u^\mu=u^\imu(x,y)$, where we have dropped the
species label, i.e. $\tau(x,y)$ and $u^\imu(x,y)$ solve the implicit
equation
%[
\begin{align}
C^\speciesa_{(y,u(x,y))}\big(\tau(x,y)\big) = x
\label{SusKernel_tau_u_implicit_eqn}
\end{align}
%]
where $u^0(x,y)$ is the solution to $u^a(x,y) u^b(x,y) g_{ab}(y)=-1$
and $u_0(x,y)=g_{a0}(y) u^a(x,y)$.  Let $f_0^\speciesa=f_0^\speciesa(y,u)$
represent the unperturbed probability function on $\EbunYp$. The
contribution to the susceptibility kernel from species $\speciesabig$
is given in local coordinates by (lemma \ref{lm_chi_coords} in appendix.)
%[
\begin{equation}
\begin{aligned}
{\chi^\speciesa|_{(x,y)}}
%\\&
&=
-f_0^\speciesa \frac{q^{\speciesa2}}{m^\speciesa}
\frac{\detg^{3/2}}{4 u_0} g^{\imu c} u^b \epsilon^{dejk}
\epsilon_{cbih}
\epsilon_{\imu\inu\isigma}
\times
\\&\qquad
\bigg(
\frac{u^a}{2}\pfrac{\tau}{y^a}\pfrac{u^\inu}{x^d}\pfrac{u^\isigma}{x^e}
-\frac{u^a}{2}\pfrac{\tau}{x^d}\pfrac{u^\inu}{y^a}\pfrac{u^\isigma}{x^e}
+\frac{u^a}{2}\pfrac{\tau}{x^d}\pfrac{u^\inu}{x^e}\pfrac{u^\isigma}{y^a}&&
\\&\qquad \qquad +
\big(-\Gamma^\inu{}_{pf}u^{p} u^f + \frac{q^\speciesa}{m^\speciesa}
F_{0pf} g^{\inu p} u^f\big)
\pfrac{\tau}{x^d}\pfrac{u^\isigma}{x^e}
\bigg) dx_{jk}\wedge dy^{ih}
\end{aligned}
\label{Plasma_coords}
\end{equation}
%]
where $g$, $F_0$ and $\Gamma^\inu{}_{ef}$ are all evaluated at $y\in
\MYp$ and each $\tau$ and $u$ belongs to the species $\speciesabig$.
This is a key result of our article.

\subsection{A spacetime inhomogeneous microscopically neutral plasma.}
\label{sch_coords_neut}

In a Vlasov model, a plasma or gas is deemed \defn{microscopically
  neutral} if in its unperturbed state $F_0=0$.  Let $M$ be Minkowski
spacetime with global Lorentzian coordinates so that
$\Gamma^\inu_{ab}=0$.  Assume that $f_0^\speciesa$ solves the zeroth
order Maxwell-Vlasov system (\ref{Plasma_MV0}) with
$\theta^\speciesa_0=i_{W^\speciesa_0}(f^\speciesa_0\Omega)$ and
$F_0=0$. In this scenario one can calculate $\chi$ explicitly.

Since Minkowski spacetime is flat and $F_0=0$ the integral curves
$C_{(x,v)}$ in global Lorentzian coordinates are the straight lines:
%[
\begin{align}
\tau=\sqrt{-g(x-y,x-y)}
\qquadand
u=\frac{(x-y)}{\tau}
\label{Plasma_Eg_tau_u}
\end{align}
%]
Differentiating with respect to $x^a$ and $y^a$ gives.
%[
\begin{equation}
\begin{aligned}
&\pfrac{\tau}{x^a}=-u_a\,,\qquad
\pfrac{\tau}{y^a}=u_a\,,\qquad
\pfrac{u^a}{x^b}=\frac{(\delta^a_b+u_a u_b)}{\tau}
\\&
\qquadand
\pfrac{u^a}{y^b}=-\frac{(\delta^a_b+u_a u_b)}{\tau}
\end{aligned}
\label{Plasma_Eg_diff}
\end{equation}
%]
If follows from (\ref{Plasma_coords}) that
%[
\begin{equation}
\begin{aligned}
\chi^\speciesa|_{(x,y)}&=
\frac{q^\speciesa f_0^\speciesa(y,u)}{4 u_0\tau^2} g^{\imu c} u^b
\epsilon_{cbih}
%\times
%\\&\qquad
\big(2dx_{0\mu}+
\epsilon^{d\sigma jk}
\epsilon_{\imu\inu\isigma}
u^\inu u_d dx_{jk}
\big)
\wedge dy^{ih}
\end{aligned}
\label{Plasma_Eg_res}
\end{equation}
%]
where $\tau(x,y)$ and $u(x,y)$ are given by (\ref{Plasma_Eg_tau_u}).

It is often useful to explore the response of an inhomogeneous
plasma due to a monochromatic electromagnetic plane wave with constant
amplitude $E$:
%[
\begin{align}
F_1=E e^{-i\omega x^0+i\kk x^1} dx^{01}\,.
\label{Plasma_plane_wave}
\end{align}
%]
Setting the initial hypersurface as $\SigmaEY=\Set{y^0=y^0_0}$, the
general initial 5-form $\zeta^\speciesa_1\in\Gamma\Lambda^5_{\SigmaEY}
\EbunYp$ satisfying $i_{W_0}\zeta^\speciesa_1=0$ is given in terms of
its components by
%[
\begin{equation}
\begin{aligned}
\zeta^\speciesa_1|_{(0,y^\mu,u^\nu)}
&=
\big(u^0 dy^{1}-{u^1}dy^{0}\big)\wedge
\big(
\zetao{1}dy^2\wedge du^{123}+ \zetao{2}dy^3\wedge du^{123}
\big)
+ \zetao{3} dy^{23}\wedge du^{123}
\\&
+\big(u^0 dy^{123}-{u^1}dy^{023}\big)
\big(
\zetao{4}du^{12} + \zetao{5}du^{13}+ \zetao{6}du^{23}
\big)
\end{aligned}
\label{Plasma_Eg_generic_zeta}
\end{equation}
%]
where $\zetao{A}=\zetaoh{A}(y^\mu,u^\nu)$ for $A=1,\ldots 6$.
For the integral curves (\ref{Plasma_Eg_tau_u}) and the initial
hypersurface $\SigmaEY=\Set{y^0=y^0_0}$ one has
$\tau_0(x,v)=({y^0_0-x^0})/{v^0}$
and the map
$\varphi$ is given by (\ref{SusKernel_def_varphi}) with
$\phi_{\tau}^\star(y^a)=x^a+\tau y^a$ and
$\phi_{\tau}^\star(u^a)=v^a$
From (\ref{Plasma_susep}) with $\chi$ given by (\ref{Plasma_Eg_res})
and $Z_1[\zetabd_1]$ given by (\ref{SusKernel_Z1_soln}) one has:
%[
\begin{equation}
\begin{aligned}
\lefteqn{\Pi_1[F_1,\zetabd_1]=}\quad&
\\
&
-\sum_\speciesa \frac{q^{\speciesa\,2}}{m^\speciesa} E e^{-i\omega x^0+i\kk x^1}
\bigg\{
 dx^{01} \int dv^{123}  T^\speciesa \frac{(v^0)^2 -(v^1)^2}{v^0}
+dx^{12} \int dv^{123}  T^\speciesa v^2
\\&
-dx^{02} \int dv^{123}  T^\speciesa \frac{v^2v^1}{v^0}
+dx^{13} \int dv^{123}  T^\speciesa v^3
+dx^{03} \int dv^{123}  T^\speciesa \frac{v^3v^1}{v^0}
\bigg\}
\\&
+\sum_\speciesa {q^{\speciesa}}\bigg\{
dx^{02}\int dv^{123}
\Big(\zetao{4}\frac{v^1(x^0-y_0^0)}{v^0}-\zetao{1}{v^1}\Big)
+
dx^{03}\int dv^{123}
\Big(\zetao{5}\frac{v^1(x^0-y_0^0)}{v^0}-\zetao{2}{v^1}\Big)
\\&
+
dx^{12}\int dv^{123} \Big(v^0\zetao{1}-\zetao{4}(x^0-y_0^0)\Big) +
dx^{13}\int dv^{123} \Big(v^0\zetao{2}-\zetao{5}(x^0-y_0^0)\Big)
\\&
+
dx^{23}\int dv^{123} \bigg(
\zetao{3}
+\zetao{4}\frac{v^1 v^3 (x^0-y_0^0)}{(v^0)^2}
-\zetao{5}\frac{v^1 v^2 (x^0-y_0^0)}{(v^0)^2}
+\zetao{6}(x^0-y_0^0)\Big(\frac{v^1}{v^0}-1\Big)
\bigg)\bigg\}
\\&
+\star d\big( \Xi_1[F_1]\big)
+
\star d\big( \Zonez[\zetabd_1]\big)
\end{aligned}
\label{Plasma_neut_Pi1}
\end{equation}
%]
where $\int dv^{123}$ denotes the triple integral operator
$\iiint_{-\infty}^\infty dv^{123}$, $v^0=\sqrt{1+v_\mu v^\mu}$,
%[
\begin{align}
T^\speciesa=T^\speciesa(x,v)=\int_{(y^0_0-x^0)/v^0}^0
e^{i\tau(-\omega v^0+\kk v^1)}
{f_0^\speciesa(x+\tau v,v)} \tau
d\tau
\label{Plasma_neut_T}
\end{align}
%]
and
$\zetao{A}=\zetao{A}(x^\mu,v^\mu)=\zetaoh{A}\big(x^\mu-{x^0v^\mu}/{v^0},v^\nu\big)$
in (\ref{Plasma_neut_Pi1}). This response is not in general plane
fronted.

For the particular case of a plane fronted plasma distribution:
%[
\begin{align}
f_0^\speciesa(x,v)=h_0^\speciesa(x^0,x^1,v^1)\delta(v^2)\delta(v^3)
\label{Plasma_new_f0}
\end{align}
%]
with initial data:
%[
\begin{align*}
\zeta^\speciesa_1=0
\end{align*}
%]
(\ref{Plasma_neut_Pi1})
becomes the plane fronted 2-form
%[
\begin{equation}
\begin{aligned}
\lefteqn{\Pi_1[F_1,\zetabd]|_x}&
\\
&=-
dx^{01} \sum_\speciesa \frac{q^{\speciesa\,2}}{m^\speciesa} E e^{-i\omega x^0+i\kk x^1}
\int_{-\infty}^\infty \!\!  dv^1
\int_{(y^0_0-x^0)/v^0}^0 \!\!\!\! d\tau\,
e^{i\tau(-\omega v^0+\kk v^1)}
{h_0^\speciesa(x^0+\tau v^0,x^1+\tau v^1,v^1)}
\frac{\tau}{v^0}
\\&\qquad
+
\star d\big( \Xi_1[F_1]\big)
\end{aligned}
\label{Plasma_neut_T01}
\end{equation}
%]
describing the response of a spacetime inhomogeneous unbounded plasma to
(\ref{Plasma_plane_wave}).

\subsection{Spacetime homogeneous unbounded plasmas}

The previous discussion
simplifies considerably if the unperturbed plasmas is homogeneous in
space and time.  In Minkowski spacetime $M$, an unbounded unperturbed plasma is
deemed \defn{spacetime homogeneous} if $A_z^\star F_0=F_0$ and
$\Adot_z^\star\theta^\speciesa_0=\theta^\speciesa_0$ for all $z\in M$
where the translation map $A_z:M\to M$, $A_z(x)=x+z$ induces the map
$\Adot_z:\Ebun\to\Ebun$, $\Adot_z=A_{z\star}$.  Such
spacetime homogeneity implies that in all inertial frames the medium
is stationary and spatially homogeneous in all directions.  Such a
spacetime homogeneous plasma will give rise to a spacetime homogeneous
electromagnetic constitutive relation.  In addition to the components
$(F_0)_{ab}$ with respect to an inertial frame being constant, the
functions $f^\speciesa(x,v)$ are independent of event position $x$ and can
therefore be written $f^\speciesa(v)$.

In this scenario the Fourier transform (\ref{Disp_homo_chi_k}) of the
susceptibility kernel (\ref{Disp_homo_freq}) for each species,
is then given by
%[
\begin{equation}
\begin{aligned}
&\FT\chi^\speciesa{}_{ab}{}^{ef}(k) dx^{ab}
\\&\quad
=
\tfrac{1}{2}q^\speciesa dx_{gh}
\int_{-\infty}^{0} d\tau \int dv^{123}
f^\speciesa_0(v) e^{-i k\cdot\RL v}
\frac{v^g}{v_0}
\big(g^{\inu e} u^f -g^{\inu f} u^e\big)
\Big(\IL_\nu^h(\tau)-\frac{u_\nu}{u_0}
\IL_0^h(\tau)\Big)
\end{aligned}
\label{SusKernel_homo_Derfler}
\end{equation}
%]
where $\MF_0$ is the $4\times4$ real matrix with components
$(\MF_0)^a{}_{b}=\eta^{ac} (F_{0})_{cb}$ generating the matrices
%[
\begin{equation}
\begin{aligned}
&\MD^a_b(\tau)=\exp\Big(\tau \frac{q^\speciesa}{m^\speciesa}
  \MF_0\Big)^a_{\ b}
\,,\qquad
\ID_b^a(\tau)=g_{bc} \MD^c_d(\tau) g^{da}
\,,\\
&\ML^a_b(\tau)=\int^{\tau}_0 \MD^a{}_b(\tau')\, d\tau'
\,,\qquad
\IL_b^a(\tau)=g_{bc} \ML^c_d(\tau) g^{da}\,,
\end{aligned}
\label{SusKernel_homo_def_DL}
\end{equation}
%]
$k\cdot\RL v=k_a\ML^a_b(\tau) v^b$ and
%[
\begin{align}
u^a(\tau,v^1,v^2,v^3) = \MD^a_b(\tau) v^b
\label{SusKernel_homo_phi_tau_u}
\end{align}
%]
The susceptibility kernel (\ref{SusKernel_homo_Derfler}) can be shown
to agree with
the results of O'Sullivan and
Derfler \cite{o1973relativistic}.

Furthermore for a microscopically neutral spacetime homogeneous plasma
with $F_0=0$, $G_1=0$ and
$f^\speciesa_0(v)=h^\speciesa_0(v^1)\delta(v^2)\delta(v^3)$ it follows
from (\ref{Plasma_neut_T01}) and (\ref{Plasma_MV_G1}) that for
$\Im(\omega)>0$
%[
\begin{align}
1=
\sum_\speciesa\frac{q^{\speciesa\,2}}{m^\speciesa\epsilon_0}
\int_{-\infty}^{\infty}
\frac{{h^\speciesa_0(v^1)\,} dv^{1}}{v^0(-\omega v^0+\kk v^1)^2}
\label{Plasma_disp_homo}
\end{align}
%]
The relativistic Landau damped dispersion
relation for plane fronted Langmuir modes in an unperturbed
spacetime homogeneous plasma arises by analytic continuation of the
integral (\ref{Plasma_disp_homo}) to the lower-half  complex $\omega$ plane.

\subsection{Langmuir modes for an  inhomogeneous unbounded
  plasma in Minkowski spacetime}

If the plasma is microscopically neutral but spacetime inhomogeneous
in its unperturbed state the Landau dispersion relation corresponding
to (\ref{Plasma_disp_homo}) becomes more involved. We define the
generalized Langmuir sector to contain perturbations described by
(\ref{Plasma_neut_T01}) but with the external polarization specified
by $\Xi_1[F_1]$ set to zero. Since $\zeta^\speciesa_1=0$,
$\Pi_1[F_1,0]$ will be denoted $\Pi_1[F_1]$. Thus (\ref{Plasma_MV_G1})
with $G_1=0$ becomes
%[
\begin{align}
\epsilon_0 F_1=-\Pi_1[F_1]
\label{Plasma_disp_restrict}
\end{align}
%]
Consider the case where planar inhomogeneities in a plasma composed of
electrons and ions arise from the unperturbed spacetime inhomogeneous
solution to the Maxwell-Vlasov system:
(\ref{Plasma_MV0}-\ref{Plasma_MV_W0}) with $F_0=0$ and
%[
\begin{equation}
\begin{aligned}
f_0^\speciesel(x^0,x^1,x^2,x^3,v^1,v^2,v^3)
&=
f_0^\speciesion(x^0,x^1,x^2,x^3,v^1,v^2,v^3)
\\&=
h\Big(x^1-\frac{v^1 x^0}{v^0},v^1\Big) \delta(v^2) \delta(v^3)\
\end{aligned}
\label{eg_dist}
\end{equation}
%]
where $q^\speciesel=-q^\speciesion$.

For example one might consider
%[
\begin{align*}
h(x^1,v^1) = n^\speciesion(x^1) A^\speciesion(x^1)
\exp\Big( - \frac{m^\speciesion{v^0}}{k_B T^\speciesion(x^1)} \Big)
\end{align*}
%]
where $A^\speciesion(x^1)$ normalizes (\ref{eg_dist}). Then
$f^\speciesion$ initially at $x^0=0$ represents a distribution of ions
where, at each spatial point $x^1$, the velocities belong to the
1-dimensional Maxwell-J\"uttner distribution.
In such a distribution the
temperature $T^\speciesion(x^1)$ and the number density of ions
$n^\speciesion(x^1)$ depend on position. It follows from
(\ref{eg_dist}) that $f^\speciesel$ also initially represents a
position dependent Maxwell-J\"uttner distribution where
$n^\speciesel(x^1)=n^\speciesion(x^1)$ and
$T^\speciesel(x^1)=T^\speciesion(x^1)m^\speciesel/m^\speciesion$. After
the initial moment, the ions and electrons drift according to
(\ref{eg_dist}) and velocities do not remain in the Maxwell-J\"uttner
distributions.  Alternatively (\ref{eg_dist}) might describe a plasma
composed of particles and anti-particles.

In the theory of a spacetime homogeneous plasma $\omega$ and $\kk$
satisfy the transcendental dispersion relation
(\ref{Plasma_disp_homo}). This relation contains an integral that is
potentially singular. The Landau prescription circumvents this
singularity by complexifying $\omega$ and defining an analytic
continuation for the integral in the complex $\omega$ plane.

Setting $h_0^\speciesa(x^0,x^1,v^1)=h\big(x^1-{v^1
  x^0}/{v^0},v^1\big)$ in (\ref{Plasma_new_f0}) yields
(\ref{eg_dist}) and (\ref{Plasma_neut_T01}) becomes
\begin{equation}
\begin{aligned}
\lefteqn{\Pi_1[F_1]|_x}\ &
\\
&=-
dx^{01} q^{\speciesel\,2}
\Big(\frac{1}{m^\speciesion}+\frac{1}{m^\speciesel}\Big)
E e^{-i\omega x^0+i\kk x^1}
\!\!\int_{-\infty}^\infty  dv^1
{h\Big(x^1-\frac{v^1x^0}{v^0},v^1\Big)}
\!\!\int_{(y^0_0-x^0)/v^0}^0 \!\!\!\!\!\!d\tau\,
e^{i\tau(-\omega v^0+\kk v^1)}
\frac{\tau}{v^0}
\end{aligned}
\label{Plasma_Lan_Pi_pre}
\end{equation}
%]
To compare with the results (\ref{Plasma_disp_homo}) given for the
homogeneous case, consider the limit $y^0_0\to-\infty$ with $\Im(\omega)>0$.
Furthermore for the non-evanescent modes considered here
$\Im(\kk)=0$. Thus (\ref{Plasma_Lan_Pi_pre}) becomes
%[
\begin{equation}
\begin{aligned}
\Pi_1[F_1]|_x=
-dx^{01} \epsilon_0\kappaQM
E e^{-i\omega x^0+i\kk x^1}
\int_{-\infty}^\infty  dv^1
\frac{h(x^1-{v^1x^0}/{v^0},v^1)}
{v^0 (-\omega v^0+\kk v^1)^2}
\end{aligned}
\label{Plasma_Lan_Pi}
\end{equation}
%]
where
%[
\begin{align*}
\kappaQM=\frac{q^{\speciesel\,2}}{\epsilon_0 m^\speciesion}+
\frac{q^{\speciesel\,2}}{\epsilon_0 m^\speciesel}
\end{align*}
%]
In a
spacetime inhomogeneous plasma there is no time-harmonic solution
or associated transcendental dispersion relation between $\omega$ and
$\kk$.  We therefore propose solving (\ref{Plasma_disp_restrict})
with a longitudinal field $F_1$ represented as the packet
%[
\begin{align}
F_1{(x^0,x^1)}=dx^{01} \int_{-\infty}^\infty d\omegaHat
\int_{-\infty}^\infty d \kkHat\
\FT{E}(\omegaHat,\kkHat) e^{-i\omegaHat x^0 + i \kkHat x^1}
\label{Plasma_Lan_F1}
\end{align}
%]
Substituting (\ref{Plasma_Lan_Pi}) and (\ref{Plasma_Lan_F1}) into
(\ref{Plasma_disp_restrict}) yields
%[
\begin{align*}
&\int_{-\infty}^\infty d\omegaHat
\int_{-\infty}^\infty d \kkHat\
\FT{E}(\omegaHat,\kkHat) e^{-i\omegaHat x^0 + i \kkHat x^1}
\\&\qquad\qquad
=
\kappaQM
\int_{-\infty}^\infty d\omegaHat
\int_{-\infty}^\infty d \kkHat\
\FT{E}(\omegaHat,\kkHat) e^{-i\omegaHat x^0+i\kkHat x^1}
\int_{-\infty}^\infty  dv^1
\frac{h(x^1-{v^1x^0}/{v^0},v^1)}
{v^0 (\omegaHat v^0+\kkHat v^1)^2}
\end{align*}
%]
Performing the inverse Fourier transform gives
%[
\begin{align*}
&4\pi^2\FT{E}(\omega,\kk)
\\&=
\kappaQM
\int_{-\infty}^\infty d x^0
\int_{-\infty}^\infty d x^1
\int_{-\infty}^\infty d\omegaHat
\int_{-\infty}^\infty d \kkHat\
\FT{E}(\omegaHat,\kkHat) e^{i(-(\omegaHat-\omega) x^0+(\kkHat-\kk) x^1)}
\int_{-\infty}^\infty  dv^1
\frac{h(x^1-{v^1x^0}/{v^0},v^1)}
{v^0 (\omegaHat v^0+\kkHat v^1)^2}
\end{align*}
%]
Since
%[
\begin{align*}
\int_{-\infty}^\infty d x^0
\int_{-\infty}^\infty d x^1
e^{i(-(\omegaHat-\omega) x^0+(\kkHat-\kk) x^1)}
{h(x^1-{v^1x^0}/{v^0},v^1)}
=
2\pi \FT{h}(\kk-\kkHat,v^1)
\delta\big(\omegaHat-\omega+v^1(\kk-\kkHat)/v^0\big)
\end{align*}
%]
where
%[
\begin{align*}
\FT{h}(\kk,v^1) =
\int_{-\infty}^\infty e^{-i \kk s} h(s,v^1) ds
\end{align*}
%]
one has
%[
\begin{align}
&\FT{E}(\omega,\kk)=
\frac{\kappaQM}{2\pi}
\int_{-\infty}^\infty d\omegaHat
\int_{-\infty}^\infty d \kkHat\
\int_{-\infty}^\infty  dv^1
\frac{\FT{E}(\omegaHat,\kkHat) }
{v^0 (\omegaHat v^0+\kkHat v^1)^2}
\FT{h}(\kk-\kkHat,v^1)
\delta\big(\omegaHat-\omega+v^1(\kk-\kkHat)/v^0\big)
\label{Plasma_Ehat1}
\end{align}
%]
Since we restrict to non-evanescent modes $\kk$ and $\kkHat$ are real.
For $\FT{E}(\omega,\kk)$ to be non-zero one requires the argument of
the $\delta$-function to be zero. Since $v^1$ is real and therefore
$v^1(\kk-\kkHat)/v^0$ is real it follows that although $\Im(\omega)>0$
and $\Im(\omegaHat)>0$ the difference $\omega-\omegaHat$ is
real. Furthermore from $\omegaHat-\omega+v^1(\kk-\kkHat)/v^0=0$ it
follows that $|\omegaHat-\omega|<|\kkHat-\kk|$. Thus
(\ref{Plasma_Ehat1}) becomes
%[
\begin{align}
&\FT{E}(\omega,\kk)=
\frac{\kappaQM}{2\pi}
\int_{-\infty}^\infty d \kkHat\
\II(\omega,\kk,\kkHat)
\label{Plasma_Ehat2}
\end{align}
%]
where
%[
\begin{align}
\II(\omega,\kk,\kkHat)=
\int_{S(\omega,k,\kHat)}
d\omegaHat\
\FT{E}(\omegaHat,\kkHat) \frac{(\kk-\kkHat)}
{(\omegaHat \kk-\kkHat \omega)^2}
\FT{h}\bigg(\kk-\kkHat,
\frac{k-\kkHat}{\sqrt{(\kkHat-\kk)^2-(\omegaHat-\omega)^2}}\bigg)
\label{Plasma_II}
\end{align}
%]
and the contour of integration for $\omegaHat$ in (\ref{Plasma_II}) is
the straight line $S(\omega,k,\kHat)$ where $\Im(\omegaHat)=\Im(\omega)>0$ and
$-|\kkHat-\kk|<\Re(\omegaHat-\omega)<|\kkHat-\kk|$. Since
$(\omegaHat-\omega)^2<(\kkHat-\kk)^2$ the arguments of $\FT{h}$ in
(\ref{Plasma_II}) are always real and non-singular on $S(\omega,k,\kHat)$.

To accommodate the situation when $\FT{E}(\omega,\kk)$ describes
damped electromagnetic waves one must continue (\ref{Plasma_II}) to
$\Im(\omega)<0$ for real $\kk$. However there is a double pole in the
complex $\omegaHat$ plane at
$\omegaHat=\omegaHat_0=\kkHat\omega/\kk$ that coincides with
$S(\omega,k,\kHat)$ when $\Im(\omega)=0$ and $|\omega|<|\kk|$. To
define an analytic continuation of (\ref{Plasma_II}) to $\Im(\omega)<0$
when $|\Re(\omega)|<|\kk|$, we indent $S(\omega,k,\kHat)$ to encircle
the pole in the standard manner and write the contour integral in
terms of a principle part and associated residue, see figure
\ref{fig_omegahat_contour}. Such a continuation scheme gives rise to
branches in the $\omega$ plane for $\II(\omega,\kk,\kkHat)$ as shown
in figure \ref{fig_branch_cuts_I}.

\begin{figure}
\begin{center}
\setlength{\unitlength}{0.04\textwidth}
\begin{picture}(16,8)
%\graphpaper[1](0,0)(15,8)
\put(0,0){\includegraphics[height=8\unitlength, viewport=0 1 501 240]{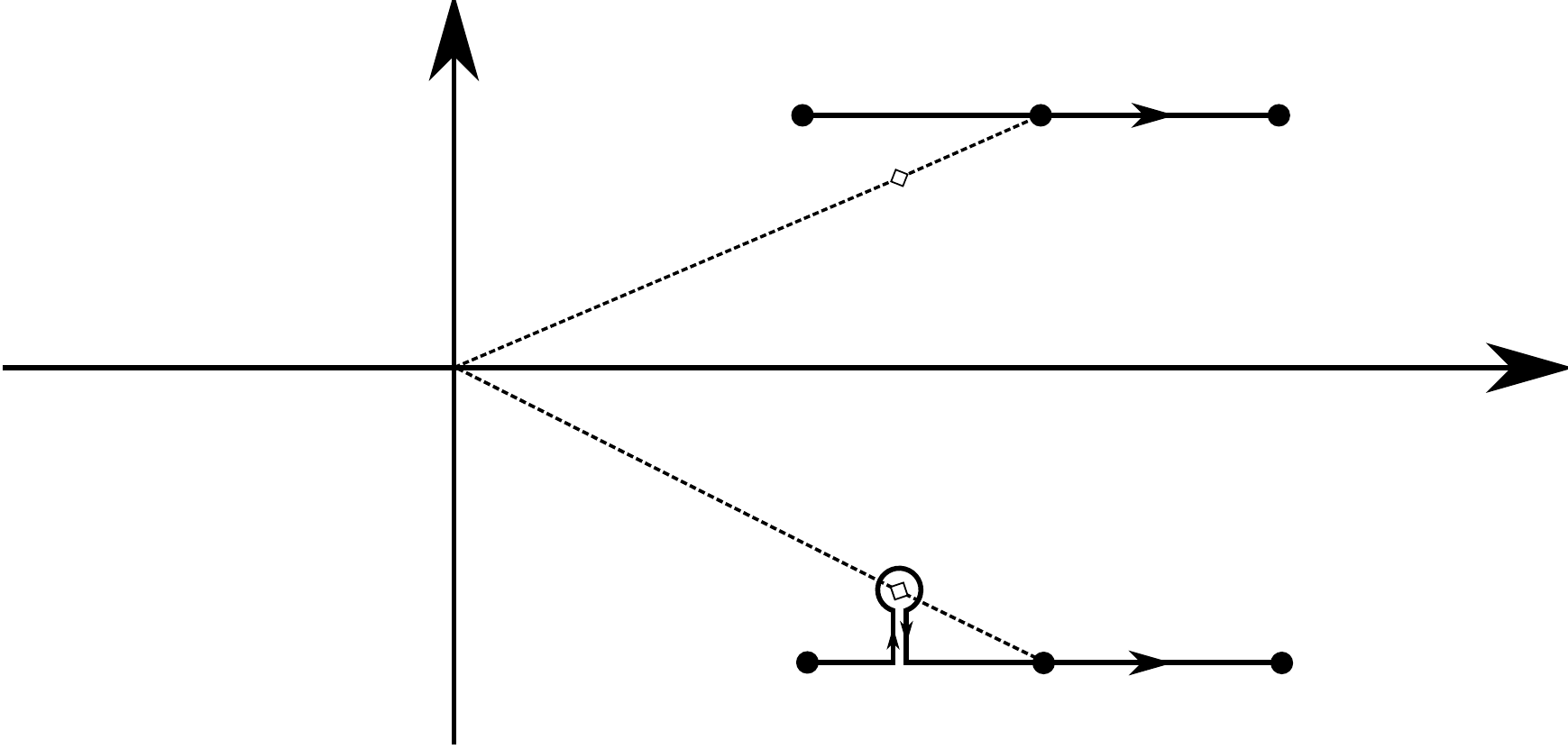}}
\put(0,7){\makebox(0,0)[l]{$\omegaHat$ - $\Cmpx$-plane}}
\put(4.5,7.5){\makebox(0,0)[r]{$\Im(\omegaHat)$}}
\put(16,4){\makebox(0,0)[tr]{$\Re(\omegaHat)$}}
\put(11,7){\makebox(0,0)[bc]{$\Im(\omega)>0$}}
\put(11,0.8){\makebox(0,0)[tc]{$\Im(\omega)<0$}}
\put(11,6.6){\makebox(0,0)[tl]{\small $\omega$}}
\put(11,1.1){\makebox(0,0)[bl]{\small $\omega$}}
\put(9.5,6){\makebox(0,0)[tl]{\small $\omegaHat_0$}}
\put(9.5,2){\makebox(0,0)[bl]{\small $\omegaHat_0$}}
\put(13.9,1){\makebox(0,0)[l]{\small $\omega\!+\!|k\!-\!\kHat|$}}
\put(8.5,1){\makebox(0,0)[r]{\small $\omega\!-\!|k\!-\!\kHat|$}}
\put(13.9,6.8){\makebox(0,0)[l]{\small $\omega\!+\!|k\!-\!\kHat|$}}
\put(8.5,6.8){\makebox(0,0)[r]{\small $\omega\!-\!|k\!-\!\kHat|$}}
\end{picture}
\end{center}
\caption{The upper contour denotes $S(\omega,k,\kHat)$ when
  $\Im(\omega)>0$ for real $k,\kHat$.
The lower contour of integration is used when $\Im(\omega)<0$ for real
$k,\kHat$.}
\label{fig_omegahat_contour}
\end{figure}

\begin{figure}
\begin{center}
\setlength{\unitlength}{0.04\textwidth}
\begin{picture}(10,8)
\put(0,0){\includegraphics[height=8\unitlength, viewport=0 0 432 324]{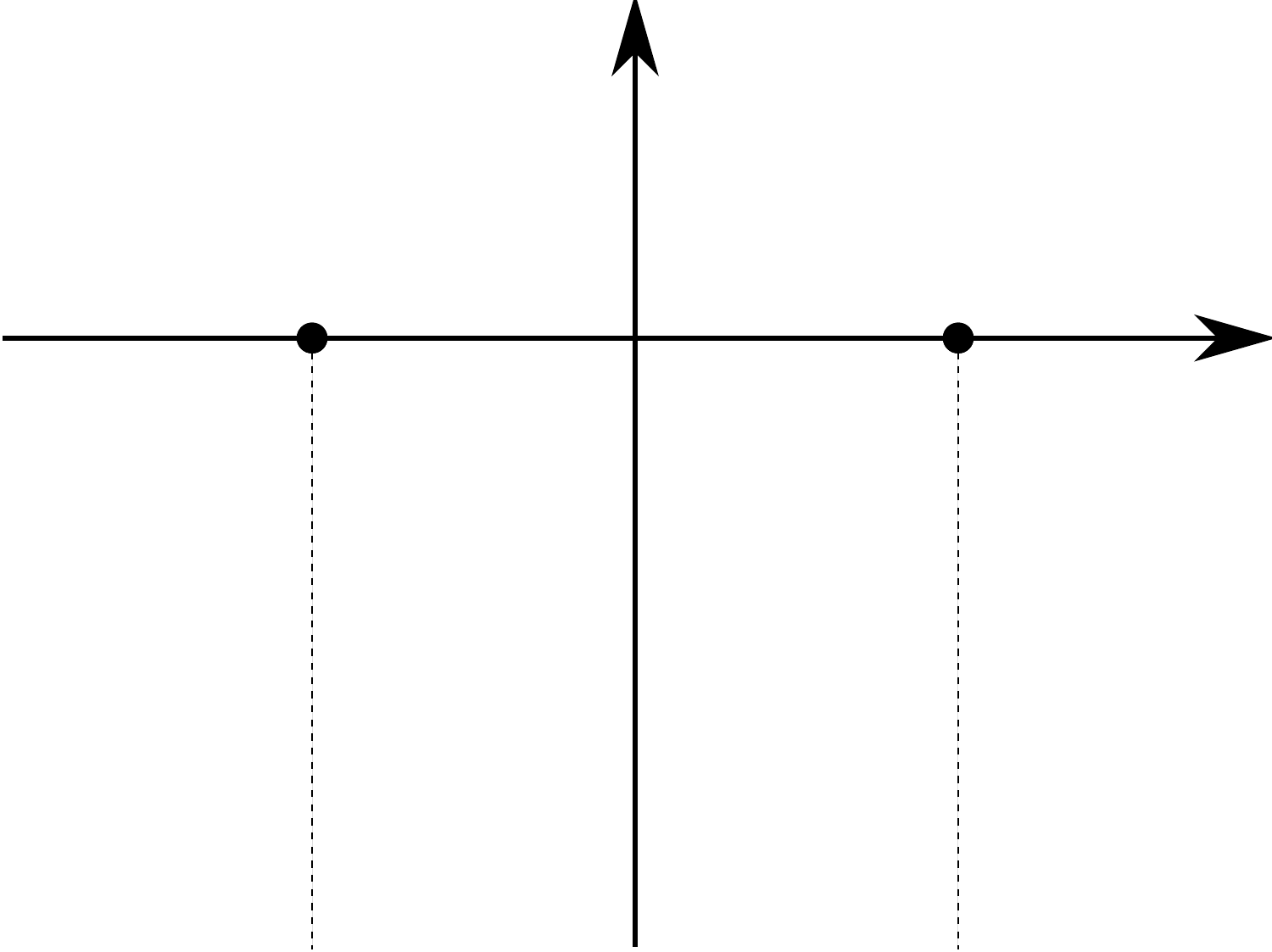}}
\put(8,7){$\omega$ - $\Cmpx$-plane}
\put(5,7){\makebox(0,0)[r]{$\Im(\omega)$}}
\put(11,5){\makebox(0,0)[tr]{$\Re(\omega)$}}
\put(8,5.2){\makebox(0,0)[b]{$|\kk|$}}
\put(2.5,5.2){\makebox(0,0)[b]{$-|\kk|$}}
\put(2.1,1){\rotatebox{90}{\small branch cut}}
\put(8.2,1){\rotatebox{90}{\small branch cut}}
\put(5.2,5.1){\makebox(0,0)[ct]{$\sqrt{\kk^2-\omega^2}>0$}}
\end{picture}
\end{center}
\caption{Branch cuts in $\omega$ for $\II(\omega,\kk,\kkHat)$.}
\label{fig_branch_cuts_I}
\end{figure}

This analytic continuation of (\ref{Plasma_II}) to
$\Im(\omega)<0$  aquires the residue
%[
\begin{align*}
R(\omega,k,\kHat)
&=
\frac{|k-\kHat|}{k\,|k|}
\pfrac{\FT{E}}{\omega}\Big(\frac{\omega\kHat}{k},\kHat\Big)
\FT{h}\Big(k-\kHat,\frac{s_k s_{k-\kHat} \omega}{\sqrt{k^2-\omega^2}}\Big)
\\& \quad
-
\frac{k}{(k^2-\omega^2)^{3/2}}
\FT{E}\Big(\frac{\omega\kHat}{k},\kHat\Big)
\,\hHat_{v^1}\!
\Big(k-\kHat,\frac{s_k s_{k-\kHat} \omega}{\sqrt{k^2-\omega^2}}\Big)
\end{align*}
%]
where $\hHat_{v^1}(k,v^1)=\pfrac{\hHat}{v^1}(k,v^1)$, $s_k=k/|k|$ and
$s_{k-\kHat}=(k-\kHat)/|k-\kHat|$. In the case when $\Im(\omega)=0$,
the principle value of (\ref{Plasma_II}) is taken together with
residue $\tfrac12 R(\omega,k,\kHat)$. Equation (\ref{Plasma_Ehat2})
then gives
%[
\begin{equation}
\begin{aligned}
\FT{E}(\omega,k)
&=
\frac{\kk}{2\pi}
\int_{-\infty}^\infty
I(\omega,k,\kHat)\,d\kHat
\quadtext{if} \Im(\omega) > 0
\quadtext{or} |\Re(\omega)|>|k|
\\
\FT{E}(\omega,k)
&=
\frac{\kk}{2\pi}
\int_{-\infty}^\infty
I(\omega,k,\kHat)\,d\kHat
- i\kk
\int_{-\infty}^{\infty}
 R(\omega,k,\kHat)\,d\kHat
\\&
\hspace{20em}\quadtext{if} \Im(\omega) < 0\quadtext{and}|\Re(\omega)|\le|k|
\\
\FT{E}(\omega,k)
&=
\frac{\kk}{2\pi}
\int_{-\infty}^\infty
{\cal P}I(\omega,k,\kHat)\,d\kHat
-\frac{ i\kk}{2}
\int_{-\infty}^{\infty}
R(\omega,k,\kHat)\,d\kHat
\\&
\hspace{20em}\quadtext{if} \Im(\omega) = 0\quadtext{and}|\Re(\omega)|<|k|
\end{aligned}
\label{Plasma_III}
\end{equation}
%]
where ${\cal P}I(\omega,k,\kHat)$ in (\ref{Plasma_III}) refers to the
principle part of (\ref{Plasma_II}) when $\Im(\omega) = 0$ and
$|\Re(\omega)|<|k|$ and hence the pole at $\omegaHat_0$ lies on the
contour $S(\omega,k,\kHat)$.
Thus in each domain above, the perturbation $\hat{E}(\omega,k)$ must
be determined by solving a non-standard integral equation.

%%%%%%%%%%%%%%%%%%%%%%%%%%%%%%%%%%%%%%%%%%%%%%%%%%%%%%%%%%%%%%%%%%%%%%
\section{Conclusions}

In this article a classical covariant description of electromagnetic
interactions in continuous matter in an arbitrary background
gravitational field has been formulated in terms of a polarization
2-form that enters into the macroscopic Maxwell equations. Linear
dispersive constitutive relations arise when this 2-form is expressed
as an affine functional of the Maxwell 2-form with the aid of a
2-point susceptibility kernel.  We have explored the constraints on
this kernel imposed by causality requirements, spacetime Killing
symmetries and local gauge freedoms.  The formalism has been applied
to an analysis of constitutive models for waves in collisionless
plasmas. In particular a formula for the linear susceptibility of a
fully ionized inhomogeneous unbounded non-stationary collisionless
plasma to a perturbation in the presence of gravity has been given in
terms of maps describing the dynamics of the plasma. This formula has
been elucidated by reference to both homogeneous and inhomogeneous
perturbations in Minkowski spacetime. In the former case one recovers
the standard Landau dispersion relation when perturbing Langmuir
modes. In the latter case we have described a generalized damping
mechanism for such modes that may arise when the unperturbed state is
both inhomogeneous and non-stationary. Such a mechanism arises from
the analytic continuation of an integral equation that replaces the
Landau dispersion relation.

It is concluded that the use of a covariant 2-point affine
susceptibility kernel in describing the electromagnetic response of
dispersive media offers a modelling tool that naturally generalizes
the use of permittivity and permeability tensors used to model
electromagnetic interactions in non-relativistic media. The
formulation in terms of an arbitrary background spacetime metric
offers potential applications in a number of astrophysical contexts
involving electromagnetic fields in inhomogeneous or non-stationary
plasmas
%***************************************************************************************************************************

%***************************************************************************************************************************

\section*{Acknowledgements}
The authors are grateful to support from EPSRC (EP/E001831/1) and the
Cockcroft Institute (STFC ST/G008248/1). 

\bibliographystyle{unsrt}
%\bibliography{GratusTuckerJMP}

\appendix
\section{Proofs of results used used in the text.}

\begin{lemma}
\label{lm_fibre_int}
Local representation of $\mapints_{\pi_\Nbun}\alpha$
in \textup{(\ref{Disp_def_fibre_int})} from the implicit definition in
equation \textup{(\ref{Notation_mapint})}.
\end{lemma}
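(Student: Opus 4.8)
\noindent The plan is to write down the $p$-form prescribed locally by the right-hand side of (\ref{Disp_def_fibre_int}) and check that it obeys the chart-free characterization (\ref{Notation_mapint}); uniqueness then identifies the two. First I would note that (\ref{Notation_mapint}) determines $\mapints_{\pi_\Nbun}\alpha$ at most uniquely: if $\eta,\eta'\in\Gamma\Lambda^pN$ both satisfy $\int_N\beta\wedge\eta=\int_\Nbun\pi_\Nbun^\star(\beta)\wedge\alpha=\int_N\beta\wedge\eta'$ for every compactly supported $\beta\in\Gamma\Lambda^{n-p}N$, then $\int_N\beta\wedge(\eta-\eta')=0$ for all such $\beta$, and non-degeneracy of the wedge-integration pairing on complementary degrees (test against a bump function times $d\sigma^{J}$, with $J$ complementary to an index set on which $\eta-\eta'$ has a nonvanishing component near a chosen point) forces $\eta=\eta'$. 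Hence it suffices to verify that the locally given form satisfies (\ref{Notation_mapint}), which at the same time shows the local prescription is independent of the fibred chart. Throughout I assume $\pi_\Nbun$ restricted to $\supp\alpha$ is proper, so all fibre integrals converge and Fubini applies; and after a partition of unity on $N$ subordinate to fibred charts $U\times\Real^r$ ($U\subset\Real^n$ open) together with bilinearity in $\alpha$, I reduce to $\alpha$ supported in one such chart, with the fibre $\Nbun_\sigma\cong\Real^r$ oriented by $d\varsigma^1\wedge\cdots\wedge d\varsigma^r$.

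For the coordinate check, write $\alpha=\sum_{I,K}\alpha_{IK}\,d\sigma^I\wedge d\varsigma^K$ over ordered multi-indices $I\subset\Set{1,\ldots,n}$, $K\subset\Set{1,\ldots,r}$ with $|I|+|K|=p+r$, and $\beta=\sum_{|J|=n-p}\beta_J\,d\sigma^J$, so $\pi_\Nbun^\star(\beta)=\sum_J\beta_J\,d\sigma^J$. Since $\dim\Nbun=n+r$, in $\pi_\Nbun^\star(\beta)\wedge\alpha$ only the summands with $J$ equal to the complement $I^c$ (forcing $|I|=p$) and $K=\Set{1,\ldots,r}$ survive; writing $d\sigma^{I^c}\wedge d\sigma^I=\epsilon(I)\,d\sigma^{1\cdots n}$, $\epsilon(I)=\pm1$, Fubini gives $\int_\Nbun\pi_\Nbun^\star(\beta)\wedge\alpha=\int_U\sum_{|I|=p}\epsilon(I)\,\beta_{I^c}\big(\int_{\Real^r}\alpha_{I,(1\cdots r)}\,d\varsigma^{1\cdots r}\big)\,d\sigma^{1\cdots n}$. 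On the other side, $\IC{\sigma}{I_1\cdots I_p}$ annihilates every summand of $\alpha$ except the one with that same index set $I$ and $K=\Set{1,\ldots,r}$ (summands with $I\not\subset I'$ vanish on contraction; those with $I\subsetneq I'$ leave a surplus $d\sigma$-factor that restricts to $0$ on a fibre), and on the surviving summand the reversed contraction order gives $\IC{\sigma}{I_1\cdots I_p}(d\sigma^I)=1$, so $\int_{\Nbun_\sigma}\IC{\sigma}{I_1\cdots I_p}\alpha|_{(\sigma,\varsigma)}=\int_{\Real^r}\alpha_{I,(1\cdots r)}\,d\varsigma^{1\cdots r}$. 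Substituting the resulting expression for $\mapints_{\pi_\Nbun}\alpha$ into $\int_N\beta\wedge(\cdot)$ and using $d\sigma^{J}\wedge d\sigma^I=\epsilon(I)\,d\sigma^{1\cdots n}$ once more reproduces exactly the integral above, so (\ref{Notation_mapint}) holds and the uniqueness step concludes. The parenthetical observation in the text — that an $\alpha$ lacking the factor $d\varsigma^1\wedge\cdots\wedge d\varsigma^r$ gives $\mapints_{\pi_\Nbun}\alpha=0$ — falls out immediately from this surviving-terms analysis.

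I expect the only delicate point to be the sign bookkeeping: both sides produce the same complementation sign $\epsilon(I)$, which therefore cancels, and one must note that the \emph{reversed} order of the iterated contraction in (\ref{Disp_def_fibre_int}) is precisely what makes $\IC{\sigma}{I_1\cdots I_p}d\sigma^{I_1\cdots I_p}=+1$ rather than $(-1)^{p(p-1)/2}$; one should also record that properness of $\pi_\Nbun|_{\supp\alpha}$ is what licenses the Fubini interchange. Everything else is bilinearity and coefficient matching.
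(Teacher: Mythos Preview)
Your proposal is correct and follows essentially the same coordinate computation as the paper: expand $\alpha$ and $\beta$ in fibred coordinates, identify the surviving terms (those with $J=I^c$ and $K=\{1,\ldots,r\}$), and apply Fubini. The one organizational difference is that you first establish uniqueness of any $p$-form satisfying (\ref{Notation_mapint}) and then verify that the local formula (\ref{Disp_def_fibre_int}) does so, whereas the paper checks both implications separately (which amounts to running the same calculation in each direction); your handling of signs, the reversed-contraction convention, and the partition-of-unity/properness hypotheses is also more explicit than the paper's.
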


\begin{proof}
On a fibred manifold $\Nbun$ of dimension $n+r$ with projection
$\pi_\Nbun:\Nbun\to N$ over a manifold $N$ of dimension $n$. Thus at
each point $\sigma\in N$ one has the fibre
$\Nbun_\sigma=\pi_\Nbun^{-1}\Set{\sigma}=
\Set{(\sigma',\varsigma)\in\Nbun\,\big|\,
  \pi_\Nbun(\sigma',\varsigma)=\sigma}$ so $\dim(\Nbun_\sigma)=r$ is
the fibre dimension.  Let $(\sigma^1,\ldots,\sigma^n)$ and
$(\sigma^1,\ldots,\sigma^n,\varsigma^1\ldots \varsigma^r)$ be local
coordinates for patches on $N$ and $\Nbun$ respectively.

Consider first the case when $\alpha\in\Gamma\Lambda^{p+r}\Nbun$
consists of a single component $\alpha_I(\sigma,\varsigma)
d\sigma^I\wedge d\varsigma^{1\ldots r}$ with no sum on $I$. Hence
explicit summation will be used in this particular proof. Set
${\hat{I}}=\Set{1,\ldots,n}\backslash I$ so that
$d\sigma^{\hat{I}}\wedge d\sigma^I=\pm d\sigma^{1\ldots n}$ and let
$\beta=\sum_J\beta_{J}(\sigma) d\sigma^{J}$ then $\beta\wedge
d\sigma^I=\pm \beta_{\hat{I}}\alpha_I d\sigma^{1\ldots n}$ so that:
%[
\begin{align*}
\lefteqn{
\sum_J
\int_{(\sigma,\varsigma)\in\Nbun} \pi_\Nbun^\star(\beta_{J}(\sigma) d\sigma^{J})
\wedge \alpha_I(\sigma,\varsigma) d\sigma^I\wedge
d\varsigma^{1\ldots r}}\qquad\qquad&
\\
&=
\sum_J
\int_{(\sigma,\varsigma)\in\Nbun} \beta_{J}(\sigma) d\sigma^{J}
\wedge \alpha_I(\sigma,\varsigma) d\sigma^I\wedge
d\varsigma^{1\ldots r}
\\&=
\sum_J
\int_{(\sigma,\varsigma)\in\Nbun} \beta_{J}(\sigma) d\sigma^{J}
\wedge  d\sigma^I\wedge
\alpha_I(\sigma,\varsigma) d\varsigma^{1\ldots r}
\\&=
\int_{(\sigma,\varsigma)\in\Nbun} \beta_{\hat{I}}(\sigma) d\sigma^{\hat{I}}
\wedge  d\sigma^I\wedge
\alpha_I(\sigma,\varsigma) d\varsigma^{1\ldots r}
\\&=
\int_{\sigma\in N} \beta_{\hat{I}}(\sigma) d\sigma^{\hat{I}}\wedge d\sigma^I
\int_{\Nbun_\sigma}\alpha_I(\sigma,\varsigma)
d\varsigma^{1\ldots r}
\\&=
\sum_J
\int_{\sigma\in N} \beta_{J}(\sigma) d\sigma^{J}\wedge d\sigma^I
\int_{\Nbun_\sigma}\alpha_I(\sigma,\varsigma)
d\varsigma^{1\ldots r}
\\&=
\int_{\sigma\in N} \beta \wedge d\sigma^I
\int_{\Nbun_\sigma}\alpha_I(\sigma,\varsigma)
d\varsigma^{1\ldots r}
\end{align*}
%]
Thus by linearity
%[
\begin{align}
\int_{\Nbun}\pi_\Nbun^\star(\beta) \wedge
\alpha
&=
\sum_I \int_{\sigma\in N} \beta \wedge d\sigma^I
\int_{\Nbun_\sigma}\alpha_I(\sigma,\varsigma)
d\varsigma^{1\ldots r}
\label{Proofs_mapint}
\end{align}
%]
where $\alpha=\sum_I \alpha_I(\sigma,\varsigma)d\sigma^I\wedge
d\varsigma^{1\ldots r}$.  If (\ref{Disp_def_fibre_int}) holds then for
$\alpha=\sum_I \alpha_I(\sigma,\varsigma)d\sigma^I\wedge
d\varsigma^{1\ldots r}$,
%[
\begin{align*}
\int_N\beta \wedge
\mapint_{\pi_\Nbun}\alpha
&=
\sum_I
\int_N\beta \wedge d\sigma^I
\int_{\varsigma\in \Nbun_\sigma}
\IC{\sigma}{I}\alpha|_{(\sigma,\varsigma)}
=
\sum_I
\int_N\beta \wedge d\sigma^I
\int_{\varsigma\in \Nbun_\sigma}
\alpha_Id\varsigma^{1\ldots r}
\\&=
\int_{\Nbun}\pi_\Nbun^\star(\beta) \wedge
\alpha
\end{align*}
%]
Hence (3). Conversely if (\ref{Notation_mapint}) holds for
$\alpha=\sum_I \alpha_I(\sigma,\varsigma)d\sigma^I\wedge
d\varsigma^{1\ldots r}$ then from (\ref{Proofs_mapint})
%[
\begin{align*}
\int_N\beta \wedge
\mapint_{\pi_\Nbun}\alpha
=
\int_{\Nbun}\pi_\Nbun^\star(\beta) \wedge
\alpha
=
\sum_I
\int_N\beta \wedge d\sigma^I
\int_{\varsigma\in \Nbun_\sigma}
\IC{\sigma}{I}\alpha|_{(\sigma,\varsigma)}
\end{align*}
%]
Since this is true for all $\beta$ then (\ref{Disp_def_fibre_int})
holds.

If $\alpha$ does not contain the factor $\varsigma^{1\ldots r}$
i.e. $\alpha= \alpha_{IK}(\sigma,\varsigma)d\sigma^I\wedge
d\varsigma^K$ where $K\ne \Set{1,\ldots,r}$ then the right hand side
of (\ref{Notation_mapint}) becomes
%[
\begin{align*}
\int_\Nbun \pi_\Nbun^\star(\beta)\wedge \alpha
=
\sum_J \int_\Nbun \beta_J \alpha_{IK}(\sigma,\varsigma)
d\sigma^J \wedge d\sigma^I \wedge d\varsigma^K
= 0
\end{align*}
%]
and the right hand side of (\ref{Disp_def_fibre_int}) becomes
%[
\begin{align*}
\sum_{I}
d\sigma^I
\int_{\varsigma\in \Nbun_\sigma}
\alpha_{IK} (\sigma,\varsigma)
d\varsigma^K = 0
\end{align*}
%]
Thus by linearity (\ref{Notation_mapint}) and
(\ref{Disp_def_fibre_int}) are equivalent for all $\alpha$.
\end{proof}

\begin{lemma}
Verification of equation \textup{(\ref{Notation_mapint_comm_d})}:
%[
\begin{align*}
\bigg(d \mapint_{\pi_\Nbun} \alpha\bigg)\bigg|_{\sigma}
= \bigg(\mapint_{\pi_\Nbun} d \alpha\bigg)\bigg|_{\sigma}
\end{align*}
%]
\label{lm_d_mapint}
\end{lemma}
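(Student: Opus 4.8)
The plan is to prove the identity by pairing both sides against arbitrary test forms and reducing everything to Stokes' theorem via the implicit definition (\ref{Notation_mapint}). Since $\alpha\in\Gamma\Lambda^{p+r}\Nbun$, both $d\mapint_{\pi_\Nbun}\alpha$ and $\mapint_{\pi_\Nbun}d\alpha$ lie in $\Gamma\Lambda^{p+1}N$, so it is enough to show
\[
\int_N\gamma\wedge d\mapint_{\pi_\Nbun}\alpha = \int_N\gamma\wedge\mapint_{\pi_\Nbun}d\alpha
\]
for every $\gamma\in\Gamma\Lambda^{n-p-1}N$ whose (compact) support lies in a neighbourhood of $\sigma$ disjoint from $\partial N$; two $(p+1)$-forms agreeing on all such test forms agree at $\sigma$.

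First I would move the exterior derivative off $\mapint_{\pi_\Nbun}\alpha$. The Leibniz rule gives $d(\gamma\wedge\mapint_{\pi_\Nbun}\alpha)=d\gamma\wedge\mapint_{\pi_\Nbun}\alpha+(-1)^{n-p-1}\gamma\wedge d\mapint_{\pi_\Nbun}\alpha$, and since $\gamma$ has compact support in the interior of $N$, Stokes' theorem kills the left-hand side upon integration, leaving $\int_N\gamma\wedge d\mapint_{\pi_\Nbun}\alpha=(-1)^{n-p}\int_N d\gamma\wedge\mapint_{\pi_\Nbun}\alpha$. Now $d\gamma\in\Gamma\Lambda^{n-p}N$, so I can apply (\ref{Notation_mapint}) with $\beta=d\gamma$ and use that pullback commutes with $d$ to obtain $\int_N d\gamma\wedge\mapint_{\pi_\Nbun}\alpha=\int_\Nbun\pi_\Nbun^\star(d\gamma)\wedge\alpha=\int_\Nbun d(\pi_\Nbun^\star\gamma)\wedge\alpha$.

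Next I would repeat the manoeuvre upstairs on $\Nbun$: Leibniz gives $d(\pi_\Nbun^\star\gamma\wedge\alpha)=d(\pi_\Nbun^\star\gamma)\wedge\alpha+(-1)^{n-p-1}\pi_\Nbun^\star\gamma\wedge d\alpha$, and because $\supp\alpha$ does not meet $\partial\Nbun$ the boundary term vanishes, giving $\int_\Nbun d(\pi_\Nbun^\star\gamma)\wedge\alpha=(-1)^{n-p}\int_\Nbun\pi_\Nbun^\star\gamma\wedge d\alpha$. Finally $d\alpha\in\Gamma\Lambda^{(p+1)+r}\Nbun$ and $\gamma\in\Gamma\Lambda^{n-(p+1)}N$, so a second application of (\ref{Notation_mapint}), this time at form-degree $p+1$, yields $\int_\Nbun\pi_\Nbun^\star\gamma\wedge d\alpha=\int_N\gamma\wedge\mapint_{\pi_\Nbun}d\alpha$. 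Chaining the equalities, the two factors of $(-1)^{n-p}$ cancel and the required identity of integrals against all admissible $\gamma$ follows, hence the pointwise statement at $\sigma$.

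The one point demanding care — and the one I would dwell on — is the validity of the two Stokes arguments, i.e. the vanishing of the boundary integrals. This is precisely where the hypotheses enter: with $\gamma$ supported away from $\partial N$ and $\supp\alpha$ disjoint from $\partial\Nbun$, one has $\supp(\pi_\Nbun^\star\gamma\wedge\alpha)\subset\supp\alpha\cap\pi_\Nbun^{-1}(\supp\gamma)$ inside the interior of $\Nbun$ (and compactly so at the level of regularity assumed throughout), so no contribution from $\partial\Nbun$ or from infinity in the fibres survives, and the analogous remark handles the integral over $N$. A purely local alternative is to insert the coordinate formula (\ref{Disp_def_fibre_int}), differentiate under the fibre integral sign, and compare with the coordinate expansion of $d\alpha$ term by term; this bypasses test forms entirely but is heavier bookkeeping, so I would relegate it to a remark and keep the coordinate-free argument as the main line.
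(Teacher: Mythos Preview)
Your argument is correct and is essentially identical to the paper's own proof: both pair against a test form of degree $n-p-1$ supported away from $\partial N$, apply the defining relation (\ref{Notation_mapint}) twice, and use Stokes' theorem on $N$ and on $\Nbun$ with the boundary terms killed by the stated support hypotheses. The only cosmetic difference is that the paper runs the chain of equalities starting from $\int_N\beta\wedge\mapint_{\pi_\Nbun}d\alpha$ rather than from $\int_N\gamma\wedge d\mapint_{\pi_\Nbun}\alpha$, but the steps and signs match yours exactly.
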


\begin{proof}
Let $\deg(\alpha)=p+r$, $\deg(\beta)=n-p-1$ and $\partial N$ and
$\partial\Nbun$ be the boundaries of $N$ and $\Nbun$. Since
$\sigma\notin\partial N$ one may choose $\beta$ to have support away
from $\partial N$ thus
%[
\begin{align*}
\int_{\partial N}\beta\wedge\Big(\mapint_{\pi_\Nbun}\alpha\Big)=0
\end{align*}
%]
and since $\alpha$ has support away from $\partial\Nbun$ then
%[
\begin{align*}
\int_{\partial \Nbun}\pi_\Nbun^\star\beta\wedge\alpha=0
\end{align*}
%]
It follows that
%[
\begin{align*}
\int_N \beta\wedge\Big(\mapint_{\pi_\Nbun}d\alpha\Big)
&=
\int_\Nbun \pi_\Nbun^\star(\beta)\wedge d\alpha
\\&=
(-1)^{n-p-1}\int_\Nbun d\big(\pi_\Nbun^\star(\beta)\wedge \alpha\big)
+(-1)^{n-p}\int_\Nbun d\pi_\Nbun^\star(\beta)\wedge\alpha
\\&=
(-1)^{n-p-1}\int_{\partial\Nbun} \pi_\Nbun^\star(\beta)\wedge \alpha
+(-1)^{n-p}\int_\Nbun \pi_\Nbun^\star(d\beta)\wedge\alpha
\\&=
(-1)^{n-p}\int_N d\beta\wedge\Big(\mapint_{\pi_\Nbun}\alpha\Big)
\\&=
(-1)^{n-p}\int_N d\Big(\beta\wedge\Big(\mapint_{\pi_\Nbun}\alpha\Big)\Big)
+
\int_N \beta\wedge d\Big(\mapint_{\pi_\Nbun}\alpha\Big)
\\&=
(-1)^{n-p}\int_{\partial N} \beta\wedge\Big(\mapint_{\pi_\Nbun}\alpha\Big)
+
\int_N \beta\wedge d\Big(\mapint_{\pi_\Nbun}\alpha\Big)
\\&=
\int_N \beta\wedge d\Big(\mapint_{\pi_\Nbun}\alpha\Big)
\end{align*}
%]
\end{proof}

\begin{lemma}
Proof of
%[
\begin{align}
\mapint_{\PiX} \star_X \alpha = \star \mapint_{\PiX} \alpha
\label{Hodge_int_commute}
\end{align}
%]
\end{lemma}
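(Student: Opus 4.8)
The plan is to reduce (\ref{Hodge_int_commute}) to an elementary coordinate computation using the local representation of the fibre integral from Lemma \ref{lm_fibre_int}. First I would decompose $\alpha\in\Gamma\Lambda(\MX\times\MY)$ by bidegree and note that both sides of (\ref{Hodge_int_commute}) ignore every summand whose $\MY$-part is not the top form $dy^{0123}$: the operator $\star_X$ (the partial Hodge map $\alpha_{IJ}\,dx^I\wedge dy^J\mapsto\alpha_{IJ}\,(\star dx^I)\wedge dy^J$) touches only the $dx$-indices, so it sends the terms with $\len{J}<4$ to forms again lacking $dy^{0123}$, and by Lemma \ref{lm_fibre_int} (the remark that $\mapint_{\pi_\Nbun}$ annihilates forms without the full fibre factor) these drop out of $\mapint_{\PiX}$. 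Hence by linearity it suffices to take $\alpha=\alpha_I(x,y)\,dx^I\wedge dy^{0123}$, summed over ordered $p$-multi-indices $I$. For such $\alpha$, the local formula (\ref{Disp_def_fibre_int}) together with $\IC{x}{I}(dx^I\wedge dy^{0123})=dy^{0123}$ gives $\mapint_{\PiX}\alpha=\sum_I\bar\alpha_I(x)\,dx^I$, where $\bar\alpha_I(x):=\int_{y\in\MY}\alpha_I(x,y)\,dy^{0123}$.

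Now I would expand the pointwise Hodge dual on $\MX$ as $\star\,dx^I=\sum_K(\star\,dx^I)_K(x)\,dx^K$, the coefficients $(\star\,dx^I)_K$ being built solely from $g_{ab}$ and $\detg$ evaluated at the base point $x\in\MX$ — in particular independent of $y$. Then $\star_X\alpha=\alpha_I(x,y)\,(\star\,dx^I)_K(x)\,dx^K\wedge dy^{0123}$, and applying (\ref{Disp_def_fibre_int}) a second time the $y$-independent factor $(\star\,dx^I)_K(x)$ passes through the fibre integral, so $\mapint_{\PiX}\star_X\alpha=(\star\,dx^I)_K(x)\,\bar\alpha_I(x)\,dx^K=\bar\alpha_I(x)\,\star\,dx^I$. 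Finally, since the Hodge map $\star$ on $\MX$ is $C^\infty(\MX)$-linear (a pointwise linear isomorphism on each exterior power), $\bar\alpha_I(x)\,\star\,dx^I=\star\big(\bar\alpha_I(x)\,dx^I\big)=\star\mapint_{\PiX}\alpha$, which is (\ref{Hodge_int_commute}).

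I do not expect a genuine obstacle here; the content is bookkeeping. The one point requiring care is the separation of variables: one must make explicit that $\star_X$ acts with metric coefficients depending on the base point $x$ alone, so that it commutes with integration over the fibre $\MY$, and then invoke function-linearity of $\star$ to absorb the fibre-integrated coefficient $\bar\alpha_I$. If a coordinate-free argument is preferred, one can instead test against an arbitrary $p$-form $\beta$ on $\MX$: by the defining property (\ref{Notation_mapint}), $\int_{\MX}\beta\wedge\mapint_{\PiX}\star_X\alpha=\int_{\MX\times\MY}\PiX^\star\beta\wedge\star_X\alpha$, and since $\PiX^\star\beta\wedge\star_X\alpha=\big(\beta\wedge\star(\text{$X$-part of }\alpha)\big)\wedge dy^{0123}$ reduces (via the standard identity $\gamma\wedge\star\omega=\langle\gamma,\omega\rangle\,\detg^{1/2}dx^{0123}$ and Fubini, the inner-product factor being $y$-independent) to $\int_{\MX}\beta\wedge\star\mapint_{\PiX}\alpha$; the two routes are equivalent, and in practice I would present whichever is shorter in the ambient notation.
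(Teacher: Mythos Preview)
Your proposal is correct and follows essentially the same approach as the paper: reduce to $\alpha=\alpha_I\,dx^I\wedge dy^{0123}$, apply $\star_X$ to the $dx^I$ factor, and use that the Hodge coefficients depend only on $x$ to pull $\star dx^I$ through the fibre integral. The paper's version is simply the terse three-line form of your argument, omitting the explicit justification (which you supply) for why terms lacking $dy^{0123}$ may be discarded and why $\star_X$ commutes with the $\MY$-integration.
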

\begin{proof}
The only non-trivial $\alpha\in\Gamma\Lambda(\MX\times\MY)$ in
(\ref{Hodge_int_commute}) can be written $\alpha=\alpha_{I}dx^{I}\wedge
dy^{0123}$. Then
%[
\begin{align*}
\mapint_{\PiX} \star_X \big(\alpha_{I}dx^{I}\wedge dy^{0123}\big)
&=
\mapint_{\PiX} \alpha_{I}(\star dx^{I})\wedge dy^{0123}
=
\star dx^{I}\int_{\MX} \alpha_{I} dy^{0123}
=
\star  \mapint_{\PiX} \alpha_{I} dx^{I}\wedge dy^{0123}
\end{align*}
%]
\end{proof}

\begin{lemma}
\label{lm_Causal}
$\Pi$ is causal on $\MXp$ if and only if

%[
\begin{align}
\begin{minipage}{0.9\textwidth}
\begin{itemize}
\item
$Z$ is causal on $\MXp$,
\item
$(d_Y\chi)|_{(x,y)}=0$
for all $(x,y)\in\MXp\times\MYp$ such that $y \notin J^-(x)$ and
\item
$\iota_{\SigmaMY}^\star(\chi)\,\big|\,_{(x,y)}=0$ for all
$(x,y)\in\MXp\times\SigmaMY$ such that $y \notin J^-(x)$, where
$\iota_{\SigmaMY}:\MXp\times\SigmaMY\hookrightarrow\MXp\times\MYp$
is the natural embedding.
\end{itemize}
\end{minipage}
\label{Proofs_causal_cond}
\end{align}
%]

\end{lemma}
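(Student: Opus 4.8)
The plan is to peel off the statement about $Z$ at once, reduce the remaining content to a pointwise statement about $d_Y\chi$ and $\iota_{\SigmaMY}^\star\chi$ by integrating by parts on the $\MYp$-fibre, and then run that pointwise statement in both directions.

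First I would observe that, since $\Pi[F,\zetabd]=\mapint_{\PiX}\chi\wedge\PiY^\star(F)+Z[\zetabd]$ is affine with the two arguments $F$ and $\zetabd$ entering additively and independently, $\Pi$ is causal on $\MXp$ if and only if the linear functional $\Pi_\chi[F]:=\mapint_{\PiX}\chi\wedge\PiY^\star(F)$ and the functional $Z$ are each causal on $\MXp$: hold $\zetabd$ fixed and vary $F$ for one implication, hold $F$ fixed and vary $\zetabd$ for the other. This disposes of the clause ``$Z$ is causal on $\MXp$'' in both directions, so it remains to show that $\Pi_\chi$ is causal on $\MXp$ exactly when $(d_Y\chi)|_{(x,y)}=0$ and $\iota_{\SigmaMY}^\star(\chi)|_{(x,y)}=0$ for all $(x,y)$ with $y\notin J^-(x)$.

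The engine is the identity obtained by writing $F=dA$ for a potential $A\in\Gamma\Lambda^1\MYp$ (global, since $\MYp$ is topologically trivial) and applying Stokes' theorem fibrewise over $\MYp$ together with the fibre-integration lemmas of the appendix, namely
\[
\Pi_\chi[F]\big|_x \;=\; -\mapint_{\PiX}(d_Y\chi)\wedge\PiY^\star(A)\Big|_x \;+\; \mathcal{B}_x\!\left[\iota_{\SigmaMY}^\star\chi,\ \iota_{\SigmaMY}^\star A\right],
\]
where $\mathcal{B}_x$ is a fibre integral over $\SigmaMY$ of a local bilinear expression in its two arguments, whose explicit form is a routine bookkeeping exercise with the coordinate formulae for $\mapint$ and is not needed beyond its locality. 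For sufficiency, if the two conditions on $\chi$ hold then the first term of the identity is an integral over $\{y:(d_Y\chi)|_{(x,y)}\neq0\}\subseteq J^-(x)$ and the second over $\{y\in\SigmaMY:\iota_{\SigmaMY}^\star\chi|_{(x,y)}\neq0\}\subseteq J^-(x)\cap\SigmaMY$, so $\Pi_\chi[F]|_x$ depends only on $A|_{J^-(x)}$; and if $F_1=F_2$ on $J^-(x)$ one passes from a chosen potential of $F_2$ to one agreeing with a chosen potential of $F_1$ on $J^-(x)$ by a gauge change $A\mapsto A+d\tilde\nu$ (possible because $A_1-A_2$ is then closed, hence exact, on $J^-(x)$, and the gauge function extends to $\MYp$), giving $\Pi_\chi[F_1]|_x=\Pi_\chi[F_2]|_x$. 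For necessity, fix $x\in\MXp$ and $y_0$ with $y_0\notin J^-(x)$: if $y_0\notin\SigmaMY$, take $A$ supported in a small ball about $y_0$ missing both $J^-(x)$ and $\SigmaMY$, so $F=dA$ agrees with $0$ on $J^-(x)$ and causality forces $\Pi_\chi[F]|_x=0$, i.e. $\mapint_{\PiX}(d_Y\chi)\wedge\PiY^\star(A)|_x=0$ with no boundary term; letting $A$ range over all such forms, the nondegeneracy of the pointwise pairing of $(d_Y\chi)|_{(x,y)}$ against $A(y)$ — equivalently the fact, already noted in the text, that $d_Y\chi$ is uniquely fixed by $\Pi$ — yields $(d_Y\chi)|_{(x,y_0)}=0$; if $y_0\in\SigmaMY$ one repeats with $A$ supported in a small ball about $y_0$ missing $J^-(x)$ but meeting $\SigmaMY$, the first term again vanishing by the case just proved, so that $\mathcal{B}_x[\iota_{\SigmaMY}^\star\chi,\iota_{\SigmaMY}^\star A]=0$ for all such $A$, whence $\iota_{\SigmaMY}^\star\chi|_{(x,y_0)}=0$.

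The main obstacle is reconciling causality — a condition on $F$ — with the identity, which runs through a potential $A$: this is exactly where the topological triviality of $\MYp$ and the vanishing of $H^1$ of the causal past $J^-(x)$ are invoked, so that ``$F_1=F_2$ on $J^-(x)$'' can be upgraded, after a gauge transformation, to ``$A_1=A_2$ on $J^-(x)$.'' The remaining labour — pinning down $\mathcal{B}_x$ and the elementary localization arguments that convert vanishing of the functional into vanishing of a form at a point — is mechanical.
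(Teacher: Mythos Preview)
Your proposal is correct and follows essentially the same route as the paper: both integrate by parts on the $\MYp$-fibre to obtain a decomposition into a $d_Y\chi$ volume term and a $\iota_{\SigmaMY}^\star\chi$ boundary term, use the topological triviality of $\MYp$ and of $J^-(x)$ to upgrade ``$F_1=F_2$ on $J^-(x)$'' to ``$A_1=A_2$ on $J^-(x)$'' via a gauge change, and run the localization argument (first away from $\SigmaMY$, then on it) for the converse. The only organisational difference is that you separate off the $Z$ clause at the outset and treat the boundary functional $\mathcal{B}_x$ abstractly, whereas the paper writes out the full coordinate identity and weaves $Z$ through both directions; neither choice affects the argument.
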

\begin{proof}
If $\hat{\iota}_{\SigmaMY}:\SigmaMY\hookrightarrow\MYp$ is the
natural embedding then $\IC{x}{ab}\iota_{\SigmaMY}^\star \chi|_{(x,y)}
=\hat{\iota}_{\SigmaMY}^\star\IC{x}{ab}\chi|_{(x,y)}$, and
%[
\begin{equation}
\begin{aligned}
\lefteqn{\int_{y\in\MYp} \IC{x}{ab} \big(\chi| \wedge \PiY^\star (dA|_y)\big)}
\quad&
\\&=
\int_{y\in\MYp} \IC{x}{ab} \big(\chi \wedge d_Y (\PiY^\star A)\big)|_{(x,y)}
\\&=
\int_{y\in\MYp} \IC{x}{ab} d_Y \big(\chi \wedge \PiY^\star
A|_y\big)\big|_{(x,y)}
-
\int_{y\in\MYp} \IC{x}{ab} \big(d_Y \chi \wedge \PiY^\star
A|_y\big)\big|_{(x,y)}
\\&=
\int_{y\in\MYp} d_Y \big(\IC{x}{ab} \chi \wedge \PiY^\star
A|_y\big)\big|_{(x,y)}
-
\int_{y\in\MYp} \IC{x}{ab} \big(d_Y \chi \wedge \PiY^\star
A|_y\big)\big|_{(x,y)}
\\&=
\int_{y\in\SigmaMY} \hat{\iota}_{\SigmaMY}^\star \IC{x}{ab} \big(\chi \wedge \PiY^\star
A|_y\big)\big|_{(x,y)}
-
\int_{y\in\MYp} \IC{x}{ab} \big(d_Y \chi \wedge \PiY^\star
A|_y\big)\big|_{(x,y)}
\\&=
\int_{y\in\SigmaMY\backslash J^-(x)} \IC{x}{ab} \iota_{\SigmaMY}^\star \big(\chi \wedge \PiY^\star
A|_y\big)\big|_{(x,y)}
+
\int_{y\in\SigmaMY\inter J^-(x)} \IC{x}{ab} \iota_{\SigmaMY}^\star \big(\chi \wedge \PiY^\star
A|_y\big)\big|_{(x,y)}
\\&\quad-
\int_{y\in\MYp\backslash J^-(x)} \IC{x}{ab} \big(d_Y \chi \wedge \PiY^\star
A|_y\big)\big|_{(x,y)}
-
\int_{y\in\MYp\inter J^-(x)} \IC{x}{ab} \big(d_Y \chi \wedge \PiY^\star
A|_y\big)\big|_{(x,y)}
\end{aligned}
\label{Proofs_int_chi}
\end{equation}
%]

First one argues that (\ref{Proofs_causal_cond}) implies that $\Pi$ is
causal on $\MXp$.
Given $x\in\MXp$ and
$F_1,F_2\in\Gamma\Lambda^2\MYp$ such that $F_1|_y = F_2|_y=0$ for
$y\in J^-(y)$, set $F=F_1-F_2$ so that $F=0$ on $J^-(x)$. Since $\MYp$
is topologically trivial $F$ is exact, $F=d\hat{A}$, and hence
$d\hat{A}=0$ on $J^-(x)$. Then since $J^-(x)$ is topologically trivial
there exists $f\in\Gamma\Lambda^0\MYp$ such that $\hat{A}=d f$ on
$J^-(x)$. Thus one can choose a gauge $A=\hat{A}-d f$ so that $A=0$ on
$J^-(x)$. Given $\zetabd$ such that $\zetabd|_y=0$ for $y\in
J^-(x)\inter\SigmaMY$ then $Z[\zetabd]|_x=0$ since $Z$ is causal. Thus
from (\ref{Proofs_int_chi})
%[
\begin{align*}
\Pi[F,\zetabd]_{ab}(x)
&=
\int_{y\in\MYp} \IC{x}{ab} \big(\chi| \wedge \PiY^\star (dA|_y)\big)
\\&=
\int_{y\in\SigmaMY\backslash J^-(x)} \IC{x}{ab} \iota_{\SigmaMY}^\star \big(\chi \wedge \PiY^\star
A|_y\big)\big|_{(x,y)}
+
\int_{y\in\SigmaMY\inter J^-(x)} \IC{x}{ab} \iota_{\SigmaMY}^\star \big(\chi \wedge \PiY^\star
A|_y\big)\big|_{(x,y)}
\\&\quad-
\int_{y\in\MYp\backslash J^-(x)} \IC{x}{ab} \big(d_Y \chi \wedge \PiY^\star
A|_y\big)\big|_{(x,y)}
-
\int_{y\in\MYp\inter J^-(x)} \IC{x}{ab} \big(d_Y \chi \wedge \PiY^\star
A|_y\big)\big|_{(x,y)}
=
0
\end{align*}
%]
since $\iota_{\SigmaMY}^\star \chi|_{(x,y)}=0$ for
$y\in\SigmaMY\backslash J^-(x)$, $A|_y=0$ for $y\in J^-(x)$ and
$d_Y\chi=0$ for $y\in\MYp\backslash J^-(x)$.

Conversely if $\Pi$ is causal on $\MXp$ then setting $F=0$ in
(\ref{Disp_Pi}) shows that $Z$ must be causal on $\MXp$. Then setting
$\zetabd=0$ then for all $A$ such that
$A=0$ on $J^-(x)$ (\ref{Proofs_int_chi}) yields
%[
\begin{equation}
\begin{aligned}
0 &= \Pi[F,\zetabd]_{ab}(x)
\\&=
\int_{y\in\SigmaMY\backslash J^-(x)} \IC{x}{ab} \iota_{\SigmaMY}^\star \big(\chi_{(x,y)} \wedge \PiY^\star
A|_y\big)\big|_{(x,y)}
-
\int_{y\in\MYp\backslash J^-(x)} \IC{x}{ab} \big(d_Y \chi_{(x,y)} \wedge \PiY^\star
A|_y\big)\big|_{(x,y)}
\end{aligned}
\label{Proofs_Causal_0}
\end{equation}
%]
The 4-dimensional domain $\MYp\backslash J^-(x)$ denotes points
outside the backward lightcone of $x$, while the $3$-dimensional
domain $\SigmaMY\backslash J^-(x)$ denotes the points on $\SigmaMY$
that are not causally connected to $x$. Choosing such an $A$ to have
support about a small neighbourhood of $y\in\MYp\backslash J^-(x)
\backslash \SigmaMY$ results in the first term of
(\ref{Proofs_Causal_0}) being zero and thus $(d_Y \chi)|_{(x,y)}=0$.
Likewise setting $A$ to have support about a small neighbourhood of
$y\in\SigmaMY\backslash J^-(x)$ implies $\iota_{\SigmaMY}^\star
(\chi)|_{(x,y)}=0$.

\end{proof}

\noindent One can now prove {\bf lemma \ref{lm_int_chac_curves}} in
section \ref{ch_SusKernel}.

\begin{proof}[Proof of lemma \ref{lm_int_chac_curves}]
Given $\sigma\in N$, with $V$ non vanishing there exists a coordinate
system $(\sigma^1,\ldots,\sigma^n)$ on $N$ adapted to $V$ so that
$V=\pfrac{}{\sigma^1}$ and the image of the curve
$\gamma_\sigma:[\tau_0(\sigma),0]\to N$ is contained in the coordinate
patch.  Write $\beta=\beta_I d\sigma^I$ then since $i_V\beta=0$ the
sum is over $I\in\Set{2,\ldots,n}$. With $\sigma^1$ distinguished
write $\beta_I(\sigma)=\beta_I(\sigma^1,\Vsig)$ where
$\Vsig=(\sigma^2,\ldots,\sigma^n)$. Also since $i_V\beta=0$,
$\beta|_{(\sigma^1,\Vsig)}=\beta_I(\sigma^1,\Vsig) d\Vsig^I$.
Likewise since $i_V\zeta=0$ one has
$\zeta|_{\sigma_0}=\zeta_{I}(\sigma_0) d\Vsig^I$.

Solving for the integral curves of $V$ gives
$\phi_N(\tau,\sigma^1,\Vsig)=(\tau+\sigma^1,\Vsig)$
%[
\begin{align*}
\phi_N^\star(\beta)|_{(\tau,\sigma^1,\Vsig)}=\beta_I(\tau+\sigma^1,\Vsig)
d\Vsig^I
\end{align*}
%]
and one may write $\tau_0(\sigma^1,\Vsig)=\tau_0(\Vsig)-\sigma^1$, giving
%[
\begin{align*}
\varphi_{N}(\zeta)|_{(\sigma^1,\Vsig)} =
\phi^\star_{N\tau_0(\sigma^1,\Vsig)}(\zeta|_{\tau_0(\Vsig)}) =
\zeta_{I}\big(\tau_0(\Vsig),\Vsig\big) d\Vsig^I
\end{align*}
%]
Thus
%[
\begin{align*}
\xi|_{(\sigma^1,\Vsig)}
&=
\mapint_{\varpi_N}\phi_N^\star(\beta) \wedge d\tau
+
\varphi_{N}(\zeta)
\\&=
\Big(\int_{\tau=\tau_0(\Vsig)-\sigma^1}^0
\beta_I(\sigma^1+\tau,\Vsig) d\tau
+
\zeta_{I}\big(\tau_0(\Vsig),\Vsig\big)
\Big) d\Vsig^I
\end{align*}
%]
Hence $i_V\xi=0$ and one may write
$\xi|_{(\sigma^1,\Vsig)}=\xi_I(\sigma^1,\Vsig) d\Vsig^I$. Now
%[
\begin{align*}
\xi_I(\sigma^1,\Vsig)
&=
\int_{\tau=\tau_0(\Vsig)-\sigma^1}^0
\beta_I(\sigma^1+\tau,\Vsig) d\tau
+
\zeta_{I}\big(\tau_0(\Vsig),\Vsig\big)
\\&=
\int_{\tau=\tau_0(\Vsig)}^{\sigma^1}
\beta_I(\tau',\Vsig) d\tau'
+
\zeta_{I}\big(\tau_0(\Vsig),\Vsig\big)
\end{align*}
%]
where $\tau'=\tau+\sigma^1$ and
%[
\begin{align*}
i_V d \xi|_{(\sigma^1,\Vsig)}
&=
i_{\pfrac{}{\sigma^1}} d(\xi_I(\sigma^1,\Vsig) d\Vsig^I)
=
i_{\pfrac{}{\sigma^1}} \big(d\xi_I(\sigma^1,\Vsig) \wedge d\Vsig^I\big)
=
\pfrac{\xi_I(\sigma^1,\Vsig)}{\sigma^1} d\Vsig^I
\\&=
\pfrac{}{\sigma^1}
\Big(\int_{\tau=\tau_0(\Vsig)}^{\sigma^1}
\beta_I(\tau',\Vsig) d\tau'
+
\zeta_{I}\big(\tau_0(\Vsig),\Vsig\big)
\Big) d\Vsig^I
=
\beta_I(\sigma^1,\Vsig)d\Vsig^I
=
\beta|_{(\sigma^1,\Vsig)}
\end{align*}
%]
Since $\sigma^1=0$ on $\SigmaN$
%[
\begin{align*}
\xi|_{(0,\Vsig)}
=
\xi_I(\tau_0(0,\Vsig),\Vsig) d\Vsig^I
=
\xi_I(\tau_0(\Vsig),\Vsig) d\Vsig^I
=
\zeta_{I}(\tau_0(\Vsig),\Vsig) d\Vsig^I
=
\zeta|_{(0,\Vsig)}
\end{align*}
%]
i.e. $\xi|_{\SigmaN}=\zeta$.

\end{proof}

\begin{lemma}
\label{lm_chi_nocoords}
Proof that
\textup{(\ref{Plasma_chi_sum_species},\ref{Plasma_res_chi})} implies
\textup{(\ref{SusKernel_Pi1_eq})} and that
\textup{(\ref{Plasma_chi_sum_species},\ref{Plasma_res_chi_imp})} implies
\textup{(\ref{SusKernel_Pi1_eq})}.
\end{lemma}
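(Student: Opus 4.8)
The plan is to route everything through the implicit identity (\ref{Plasma_res_chi_imp}), first noting that (\ref{Plasma_res_chi}) and (\ref{Plasma_res_chi_imp}) are the same statement written with and without a choice of coordinate two-forms $\gamma=dy^{ef}$: reading the action of $\What^\speciesa_1$ on such forms off (\ref{Plasma_MV_W1}) identifies $i_{\What^\speciesa_1(dy^{ef})}\theta^\speciesa_0$ with the antisymmetrisation in $(a,b)$ of the combination $g^{\nu a}u^b\,\IC{u}{\nu}\theta^\speciesa_0$ occurring in (\ref{Plasma_res_chi}), while unwinding the projector $S$ of (\ref{Plasma_def_S}) --- which on a six-form already carrying the factor $dy^{0123}$ acts as the identity --- turns the operation $\hashx\,dy^{cd}\wedge\IC{y}{abcd}(\cdots)$ of (\ref{Plasma_res_chi}) into the $\hashx\,S\,\Psi^{\speciesa\star}(\cdots)$ of (\ref{Plasma_res_chi_imp}). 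This is a routine index computation, so it will suffice to deduce (\ref{SusKernel_Pi1_eq}) from (\ref{Plasma_res_chi_imp}).

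Fix $x\in\MXp$ and let $F_1$ have compact support in $\DomPsi_x$. Wedging (\ref{Plasma_res_chi_imp}) with $\gamma=F_1$, summing over species via (\ref{Plasma_chi_sum_species}) and applying $\mapint_{\PiX}$, I would first simplify the right-hand side: since $S\beta$ always carries the factor $dy^{0123}$ it has the form for which (\ref{Hodge_int_commute}) applies, so $\mapint_{\PiX}\hashx S\beta=\star\,\mapint_{\PiX}S\beta$, and since $\mapint_{\PiX}$ sees only the $dy^{0123}$-part of its argument, which $S\beta$ and $\beta$ share, $\mapint_{\PiX}S\beta=\mapint_{\PiX}\beta$. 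This gives
\[
\mapint_{\PiX}\chi\wedge\PiY^\star(F_1)=-\sum_\speciesa q^\speciesa\,\star\,\mapint_{\PiX}\Psi^{\speciesa\star}\!\Big(d\tau\wedge\ppY^{\speciesa\star}\big(i_{\What^\speciesa_1(F_1)}\theta^\speciesa_0\big)\Big),
\]
where, being built from $i_{\What^\speciesa_1(F_1)}\theta^\speciesa_0$, the integrand restricted to the fibre $\Set{x}\times\MYp$ has compact support inside $\DomPsi_x$ and so extends smoothly by zero.

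The substantive step is the change of variables in this last integral. I would introduce $\Upsilon^\speciesa:\Nbun^\speciesa_X\to\MXp\times\MYp$, $\Upsilon^\speciesa(\tau,x,v)=\big(x\,,\,C^\speciesa_{(x,v)}(\tau)\big)$, and $\rho^\speciesa:\Nbun^\speciesa_X\to\Nbun^\speciesa_Y$, $\rho^\speciesa(\tau,x,v)=\big(-\tau\,,\,\Cd^\speciesa_{(x,v)}(\tau)\big)=\big(-\tau\,,\,\phi^\speciesa(\tau,x,v)\big)$, with $\phi^\speciesa$ and $\Phi^\speciesa$ as in (\ref{Plasma_def_phi}) and (\ref{Plasma_def_varphi}). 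By uniqueness of solutions to (\ref{Plasma_Lorentz_Force},\ref{Plasma_Lorentz_Force_norm}) the worldline through $(y,u)=\Cd^\speciesa_{(x,v)}(\tau)$ is the reparametrised curve $C^\speciesa_{(y,u)}(s)=C^\speciesa_{(x,v)}(\tau+s)$, so $\Phi^\speciesa(-\tau,y,u)=(x,y)$ and hence $\Psi^\speciesa\circ\Upsilon^\speciesa=\rho^\speciesa$ wherever $\Psi^\speciesa$ is defined --- which, on the support of the integrand, is everything, since $F_1$ is supported in $\DomPsi_x$, where by definition the two-point worldline is unique. Because $\PiX\circ\Upsilon^\speciesa=\piX\circ\ppX^\speciesa$, the map $\Upsilon^\speciesa$ covers the identity on $\MXp$ and restricts, over each $x$ and on the support of the integrand, to a diffeomorphism onto $\DomPsi_x$; by naturality of fibre integration under such bundle maps, together with the factorisation $\mapint_{\PiX}=\mapint_{\piX}\mapint_{\ppX^\speciesa}$ as a composition of fibre integrals, one obtains
\[
\mapint_{\PiX}\Psi^{\speciesa\star}\!\Big(d\tau\wedge\ppY^{\speciesa\star}\big(i_{\What^\speciesa_1(F_1)}\theta^\speciesa_0\big)\Big)=\pm\,\mapint_{\piX}\mapint_{\ppX^\speciesa}\rho^{\speciesa\star}\Big(d\tau\wedge\ppY^{\speciesa\star}\big(i_{\What^\speciesa_1(F_1)}\theta^\speciesa_0\big)\Big),
\]
and since $\rho^{\speciesa\star}d\tau=-d\tau$ while $\ppY^\speciesa\circ\rho^\speciesa=\phi^\speciesa$, the right-hand integrand equals $-\,d\tau\wedge\phi^{\speciesa\star}\big(i_{\What^\speciesa_1(F_1)}\theta^\speciesa_0\big)$. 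With the orientation conventions carried by the fibre integrals the remaining sign factors combine to $+1$, so substituting back reproduces (\ref{SusKernel_Pi1_eq}); by the first paragraph the same identity then follows equally from (\ref{Plasma_chi_sum_species},\ref{Plasma_res_chi}). The main obstacle is precisely this last paragraph: checking that $\Upsilon^\speciesa$ is a fibrewise diffeomorphism onto $\DomPsi_x$ on the support of the integrand, that $\mapint_{\PiX}$ factorises correctly after the transport, and that all orientation signs --- including the reversal $\tau\mapsto-\tau$ --- conspire to give the sign displayed in (\ref{SusKernel_Pi1_eq}).
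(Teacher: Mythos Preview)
Your proposal is correct and follows essentially the same route as the paper. A few presentational differences are worth noting. The paper defines its $\Upsilon^\speciesa$ as a diffeomorphism $\Nbun^\speciesa_Y\to\Nbun^\speciesa_X$, $(\tau,y,u)\mapsto(-\tau,\Cd^\speciesa_{(y,u)}(\tau))$ --- the inverse of your $\rho^\speciesa$ --- and never introduces your $\Upsilon^\speciesa:\Nbun^\speciesa_X\to\MXp\times\MYp$ explicitly; instead it factors the change of variables as the composite $\Psi^\speciesa$ followed by its $\Upsilon^\speciesa$, which amounts to the inverse of your map. More substantively, rather than invoking ``naturality of fibre integration under bundle maps'' and the factorisation $\mapint_{\piX}\mapint_{\ppX^\speciesa}$ abstractly, the paper pairs everything against an arbitrary test form $\beta\in\Gamma\Lambda^2\MX$ via the defining identity (\ref{Notation_mapint}), reducing each step to an honest change of variables in an ordinary integral; this makes the passage through $\DomPsi$, $\DomPsi'$, $\Nbun^\speciesa_Y$ and $\Nbun^\speciesa_X$ explicit and sidesteps any appeal to a general naturality statement. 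Your acknowledged ``main obstacle'' --- the fibrewise diffeomorphism property, the factorisation of fibre integrals, and the sign bookkeeping --- is exactly what the paper's test-form argument handles step by step, so if you wanted to tighten your write-up that is the device to borrow.
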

\begin{proof}
First (\ref{Plasma_res_chi}) is equivalent to
(\ref{Plasma_res_chi_imp}) since given
$\gamma\in\Gamma\Lambda^2\MYp$ one has
$\dual{i_{(y,u)}\gamma}=u^a\gamma_{ab} g^{bc} \pfrac{}{y^c}$ and hence
$\What^\speciesa(\gamma)=
\frac{q^{\speciesa}}{m^\speciesa}\Vert_{(y,u)}(\dual{i_{(y,u)}\gamma})
=\frac{q^{\speciesa}}{m^\speciesa}u^a\gamma_{ab} g^{b\nu} \pfrac{}{u^\nu}$.
From (\ref{Plasma_res_chi}) it follows that
%[
\begin{align*}
\chi^\speciesa\wedge\PiY^\star \gamma
&=
\tfrac1{2} \frac{q^{\speciesa\,2}}{m^\speciesa}\hashx
dy^{cd}
\wedge\IC{y}{abcd}
\Psi^{\speciesa\star} \Big( d\tau \wedge
\varpi_Y^{\speciesa\star}
\big(g^{\nu a} u^b \IC{u}{\nu} \theta^\speciesa_0\big)\Big)\wedge
\PiY^\star \gamma
\\&=
\frac{q^{\speciesa\,2}}{m^\speciesa}\hashx
S\IC{y}{ab}
\Psi^{\speciesa\star} \Big( d\tau \wedge
\varpi_Y^{\speciesa\star}
\big(g^{\nu a} u^b \IC{u}{\nu} \theta^\speciesa_0\big)\Big)\wedge
\PiY^\star \gamma
\\&=
-\frac{q^{\speciesa\,2}}{m^\speciesa}\hashx
S\Psi^{\speciesa\star} \Big( d\tau \wedge
\varpi_Y^{\speciesa\star}
\big(g^{\nu a} u^b \IC{u}{\nu} \theta^\speciesa_0\big)\Big)\wedge
\IC{y}{ab}\PiY^\star \gamma
\\&=
-\frac{q^{\speciesa\,2}}{m^\speciesa}\hashx
S\Psi^{\speciesa\star} \Big( d\tau \wedge
\varpi_Y^{\speciesa\star}
\big(\gamma_{ab}g^{\nu a} u^b \IC{u}{\nu} \theta^\speciesa_0\big)\Big)
\\&=
-q^{\speciesa}\hashx
S\Psi^{\speciesa\star} \Big( d\tau \wedge
\varpi_Y^{\speciesa\star}
\big(i_{\What^\speciesa(\gamma)}\theta^\speciesa_0\big)\Big)
\end{align*}
%]
i.e. (\ref{Plasma_res_chi_imp}). That
(\ref{Plasma_res_chi_imp}) implies (\ref{Plasma_res_chi}) follows
since the above argument is true for all $\gamma$.

To prove (\ref{SusKernel_Pi1_eq}) note that the domains
$\Nbun^\speciesa_X$ and $\Nbun^\speciesa_Y$ are related via the
diffeomorphism
%[
\begin{align}
\Upsilon^\speciesa:\Nbun^\speciesa_Y\to\Nbun^\speciesa_X\,,\qquad
\Upsilon^\speciesa(\tau,y,u)=\big(-\tau,\Cd^\speciesa_{(y,u)}(\tau)\big)
\label{Proofs_def_Upsilon}
\end{align}
%]
Thus $\Upsilon^{\speciesa\star}(d\tau)=-d\tau$ and setting
$(x,v)=\Cd^\speciesa_{(y,u)}(\tau)$ with $\tau>0$ yields
%[
\begin{align*}
\phi^{\speciesa}\big(\Upsilon^\speciesa(\tau,y,u)\big)
=
\phi^{\speciesa}\big(-\tau,\Cd^\speciesa_{(y,u)}(\tau)\big)
=
\phi^{\speciesa}(-\tau,x,v)
=
(y,u)
=
\ppY^\speciesa(\tau,y,u)
\end{align*}
%]
so that $\ppY^\speciesa=\phi^{\speciesa}\circ\Upsilon^\speciesa$ and
thus
$\ppY^{\speciesa\star}=\Upsilon^{\speciesa\star}\circ\phi^{\speciesa\star}$.
Now
%[
\begin{align*}
\Upsilon^{\speciesa\star}\Big(d\tau \wedge
\phi^{\speciesa\star}
\big(i_{\What^\speciesa(F_1)}\theta^\speciesa_0\big)\Big)
=
\Upsilon^{\speciesa\star}(d\tau) \wedge
\Upsilon^{\speciesa\star}\phi^{\speciesa\star}
\big(i_{\What^\speciesa(F_1)}\theta^\speciesa_0\big)
=
-d\tau \wedge
\varpi_Y^{\speciesa\star}
\big(i_{\What^\speciesa(F_1)}\theta^\speciesa_0\big)
\end{align*}
%]
hence
%[
\begin{align}
\chi^\speciesa\wedge\PiY^\star F_1
&=
q^{\speciesa}\hashx
S\Psi^{\speciesa\star}
\Upsilon^{\speciesa\star}\Big(d\tau \wedge
\phi^{\speciesa\star}
\big(i_{\What^\speciesa(F_1)}\theta^\speciesa_0\big)\Big)
\label{Proofs_chispecies_F1}
\end{align}
%]

From (\ref{Plasma_def_varphi})
%[
\begin{align*}
\PiX\big(\Phi^{\speciesa}(\tau,y,u)\big)=
\PiX\big(C^\speciesa_{(y,u)}(\tau),y\big)=
C^\speciesa_{(y,u)}(\tau)
\end{align*}
%]
and from (\ref{Proofs_def_Upsilon})
%[
\begin{align*}
\piX\big(\ppX^\speciesa\big(\Upsilon^\speciesa(\tau,y,u)\big)\big)
=
\piX\big(\ppX^\speciesa\big(-\tau,\Cd^\speciesa_{(y,u)}(\tau)\big)\big)
=
\piX\big(\Cd^\speciesa_{(y,u)}(\tau)\big)
=
C^\speciesa_{(y,u)}(\tau)
\end{align*}
%]
Hence $\PiX\circ\Phi^{\speciesa}=
\piX\circ\ppX^\speciesa\circ\Upsilon^\speciesa$ and so
%[
\begin{align}
\Phi^{\speciesa\star}\circ\PiX^\star
=
\Upsilon^{\speciesa\star}\circ\ppX^{\speciesa\star}\circ\pi_X^\star
\label{Proofs_maps_stars}
\end{align}
%]

From the definition of $S$ one has
%[
\begin{align}
\mapint_{\PiX} S \gamma = \mapint_{\PiX} \gamma
\label{Proofs_S}
\end{align}
%]
for any $\gamma\in\Gamma\Lambda^8(\MX\times\MY)$.

Since $\Psi^{\speciesa}:\DomPsi\to\DomPsi'$ is a diffeomorphism then
%[
\begin{align}
\int_\DomPsi\Psi^{\speciesa\star}\gamma = \int_{\DomPsi'}\gamma
\label{Proofs_int_DomPsi}
\end{align}
%]
for any $\gamma\in\Gamma\Lambda^8(\DomPsi')$. Likewise since
$\Upsilon^\speciesa:\Nbun^\speciesa_Y\to\Nbun^\speciesa_X$ is a
diffeomorphism
%[
\begin{align}
\int_{\Nbun^\speciesa_Y}\Upsilon^{\speciesa\star}\gamma =
\int_{\Nbun^\speciesa_X}\gamma
\label{Proofs_int_Upsilon}
\end{align}
%]
for any $\gamma\in\Gamma\Lambda^8(\Nbun^\speciesa_X)$.

For convenience set $\alpha^\speciesa=d\tau \wedge
\phi^{\speciesa\star}
\big(i_{\What^\speciesa(F_1)}\theta^\speciesa_0\big)
\in\Gamma\Lambda^5\Nbun^\speciesa_X$.
For fixed $x$ assume that $F_1$ has support in $\DomPsi_x$. Then one
can choose $\beta\in\Gamma\Lambda^2\MX$ so that
$\PiX^\star\beta\wedge\PiY^\star F_1$ has support inside
$\DomPsi$. Thus from (\ref{Proofs_chispecies_F1})
%[
\begin{align}
\supp\big(\PiX^\star(\star\beta)\wedge
\Psi^{\speciesa\star}
\Upsilon^{\speciesa\star}
\alpha^\speciesa\big)
=
\supp\big(\PiX^\star\beta\wedge\chi^\speciesa\wedge\PiY^\star F_1\big)
\subset\DomPsi
\label{Proofs_supp_betaalpha}
\end{align}
%]
Now
%[
\begin{align*}
\int_{\MX}\beta\wedge\mapint_{\PiX}
\chi^\speciesa\wedge\PiY^\star F_1
&=
\int_{\MX}\beta\wedge\mapint_{\PiX}
q^{\speciesa}\hashx S\Psi^{\speciesa\star}
\Upsilon^{\speciesa\star}
\alpha^\speciesa
&&
\text{from (\ref{Proofs_chispecies_F1})}
\\&=
q^{\speciesa}\int_{\MX}\beta\wedge\star\mapint_{\PiX}
S\Psi^{\speciesa\star}
\Upsilon^{\speciesa\star}
\alpha^\speciesa
&&
\text{from (\ref{Hodge_int_commute})}
\displaybreak[0]\\&=
q^{\speciesa}\int_{\MX}\beta\wedge\star\mapint_{\PiX}
\Psi^{\speciesa\star}
\Upsilon^{\speciesa\star}
\alpha^\speciesa
&&
\text{from (\ref{Proofs_S})}
\displaybreak[0]\\&=
-q^{\speciesa}\int_{\MX}(\star\beta)\wedge\mapint_{\PiX}
\Psi^{\speciesa\star}
\Upsilon^{\speciesa\star}
\alpha^\speciesa
\displaybreak[0]\\&=
-q^\speciesa\int_{\MX\times\MY}\PiX^\star(\star\beta)\wedge
\Psi^{\speciesa\star}
\Upsilon^{\speciesa\star}
\alpha^\speciesa
&&
\text{from (\ref{Notation_mapint})}
\displaybreak[0]\\&=
-q^\speciesa\int_{\DomPsi}\PiX^\star(\star\beta)\wedge
\Psi^{\speciesa\star}
\Upsilon^{\speciesa\star}
\alpha^\speciesa
&&
\text{from (\ref{Proofs_supp_betaalpha})}
\displaybreak[0]\\&=
-q^\speciesa\int_{\DomPsi}\Psi^{\speciesa\star}
\Big(\Phi^{\speciesa\star}\PiX^\star(\star\beta)\wedge
\Upsilon^{\speciesa\star}
\alpha^\speciesa\Big)
&&
\text{from (\ref{Plasma_def_psi})}
\displaybreak[0]\\&=
-q^\speciesa\int_{\DomPsi'}
\Phi^{\speciesa\star}\PiX^\star(\star\beta)\wedge
\Upsilon^{\speciesa\star}
\alpha^\speciesa
&&
\text{from (\ref{Proofs_int_DomPsi})}
\displaybreak[0]\\&=
-q^\speciesa\int_{\Nbun^\speciesa_Y}
\Phi^{\speciesa\star}\PiX^\star(\star\beta)\wedge
\Upsilon^{\speciesa\star}
\alpha^\speciesa
&&
\text{since $\DomPsi'\subset\Nbun_Y^\speciesa$}
\displaybreak[0]\\&=
-q^\speciesa\int_{\Nbun^\speciesa_Y}
\Upsilon^{\speciesa\star}\ppX^{\speciesa\star}\pi_X^\star
(\star\beta)\wedge
\Upsilon^{\speciesa\star}
\alpha^\speciesa
&&
\text{from (\ref{Proofs_maps_stars})}
\displaybreak[0]\\&=
-q^\speciesa\int_{\Nbun^\speciesa_X}
\ppX^{\speciesa\star}\pi_X^\star
(\star\beta)\wedge
\alpha^\speciesa
&&
\text{from (\ref{Proofs_int_Upsilon})}
\displaybreak[0]\\&=
-q^\speciesa\int_{\EbunX}
\pi_X^\star
(\star\beta)\wedge
\mapint_{\ppX^\speciesa}\alpha^\speciesa
&&
\text{from (\ref{Notation_mapint})}
\displaybreak[0]\\&=
-q^\speciesa\int_{\MX}
(\star\beta)\wedge
\mapint_{\piX}\mapint_{\ppX^\speciesa}\alpha^\speciesa
&&
\text{from (\ref{Notation_mapint})}
\\&=
q^\speciesa\int_{\MX}
\beta\wedge
\star\mapint_{\piX}\mapint_{\ppX^\speciesa}\alpha^\speciesa
\end{align*}
%]
Summing over $\speciesabig$ gives
%[
\begin{align*}
\int_{\MX}\beta\wedge\mapint_{\PiX}
\chi\wedge\PiY^\star F_1
&=
\sum_{\speciesa} q^\speciesa\int_{\MX}
\beta\wedge
\star\mapint_{\piX}\mapint_{\ppX^\speciesa}\alpha^\speciesa
\end{align*}
%]
Since this is true for all $\beta$ with support in a neighbourhood of
$x$ then (\ref{SusKernel_Pi1_eq}) holds at $x$.
\end{proof}

\begin{lemma}
\label{lm_chi_coords}
The derivation of \textup{(\ref{Plasma_coords})} from
\textup{(\ref{Plasma_res_chi})}.
\end{lemma}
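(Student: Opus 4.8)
The plan is to evaluate the right-hand side of (\ref{Plasma_res_chi}) directly in the induced coordinates $(y^a,u^\mu)$ on $\EbunYp$, $(\tau,y^a,u^\mu)$ on $\Nbun^\speciesa_Y$ and $(x^a)$ on $\MXp$, carrying the operations $\IC{u}{\nu}$, multiplication by $g^{\nu a}u^b$, $\varpi_Y^{\speciesa\star}$, $d\tau\wedge$, $\Psi^{\speciesa\star}$, $\IC{y}{abcd}$, $dy^{cd}\wedge$ and $\hashx$ through in that order. First I would insert $\theta^\speciesa_0=i_{W^\speciesa_0}(f^\speciesa_0\Omega)$ from (\ref{Plasma_f}), with $\Omega$ from (\ref{Plasma_Omega}) and $W^\speciesa_0$ from (\ref{Plasma_MV_W_coords}) at $F=F_0$, and expand the interior product of the Liouville field against $dy^{0123}\wedge du^{123}$. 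This writes $\theta^\speciesa_0$ as $\tfrac{f^\speciesa_0\detg}{u_0}$ times the sum of a ``horizontal'' monomial proportional to $u^p\,dy^{\widehat p}\wedge du^{123}$ (summed over $p$, with sign $(-1)^p$; here $dy^{\widehat p}$ is the product of the three $dy$'s other than $dy^p$) and a ``vertical'' monomial proportional to $dy^{0123}\wedge(i_{W^\speciesa_0}du^{123})$, in which the fibre component of $W^\speciesa_0$ is exactly $a^\mu=-\Gamma^\mu{}_{pf}u^pu^f+\tfrac{q^\speciesa}{m^\speciesa}(F_0)_{pf}g^{\mu p}u^f$, the quantity appearing in the last line of (\ref{Plasma_coords}).

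Next I would contract with $\IC{u}{\nu}$, noting that only $du^\nu$-bearing factors survive, so $\IC{u}{\nu}du^{123}=\tfrac12\epsilon_{\nu\rho\sigma}du^{\rho\sigma}$; after multiplying by $g^{\nu a}u^b$ and pulling back with $\varpi_Y^{\speciesa\star}$ one obtains two $5$-forms on $\Nbun^\speciesa_Y$, built respectively from $dy^{\widehat p}\wedge du^{\rho\sigma}$ and from $dy^{0123}\wedge du^\sigma$. Wedging with $d\tau$ and applying $\Psi^{\speciesa\star}$ substitutes $d\tau\mapsto\pfrac{\tau}{x^a}dx^a+\pfrac{\tau}{y^a}dy^a$ and $du^\mu\mapsto\pfrac{u^\mu}{x^a}dx^a+\pfrac{u^\mu}{y^a}dy^a$, with $\tau(x,y),u^\mu(x,y)$ solving (\ref{SusKernel_tau_u_implicit_eqn}) and $f^\speciesa_0$ evaluated at $(y,u(x,y))$. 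The key simplification is that $dy^{cd}\wedge\IC{y}{abcd}(\cdot)$ kills every term of the resulting $6$-form except the one with exactly two $dx$-legs and a full $dy^{0123}$. Selecting that part forces, in the horizontal term, either $d\tau\mapsto\pfrac{\tau}{y^p}dy^p$ with both $du$'s sent to $dx$ (the $\tfrac{u^a}{2}\pfrac{\tau}{y^a}\pfrac{u^\inu}{x^d}\pfrac{u^\isigma}{x^e}$ term) or $d\tau\mapsto\pfrac{\tau}{x^d}dx^d$ with exactly one of the two $du$'s sent to $dy^p$ (the two choices of which $du$ giving the two $\mp\tfrac{u^a}{2}\pfrac{\tau}{x^d}\cdots$ terms), and, in the vertical term, $d\tau\mapsto\pfrac{\tau}{x^d}dx^d$ together with $du^\sigma\mapsto\pfrac{u^\sigma}{x^e}dx^e$ (the $a^\inu\pfrac{\tau}{x^d}\pfrac{u^\isigma}{x^e}$ term). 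This reproduces precisely the four summands in the large bracket of (\ref{Plasma_coords}).

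Finally I would perform the remaining contractions and the Hodge map. At this stage the relevant part of the $6$-form carries exactly two $dx$-legs and the full $dy^{0123}$; applying $\IC{y}{abcd}$ produces the alternating symbol $\epsilon_{abcd}$ from $dy^{0123}$ while leaving the $dx$-part untouched, wedging back in $dy^{cd}$ restores two $dy$-legs, and relabelling the summed indices ($a\!\to\!c$, $c\!\to\!i$, $d\!\to\!h$, $\nu\!\to\!\imu$) puts $g^{\nu a}u^b\epsilon_{abcd}\,dy^{cd}$ into the form $g^{\imu c}u^b\epsilon_{cbih}\,dy^{ih}$ of (\ref{Plasma_coords}). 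The operator $\hashx$ then replaces the two remaining $dx$-legs by their $\star_X$-dual, contributing the second alternating symbol $\epsilon^{dejk}$, the factor $\rootdetg$ and the lowered legs $dx_{jk}$; the three-index symbol $\epsilon_{\imu\inu\isigma}$ is the one left over from the fibre interior products, the measure factors combine to $\detg\cdot\rootdetg=\detg^{3/2}$, and the overall numerical coefficient is $\tfrac12$ (from (\ref{Plasma_res_chi})) times the $\tfrac12$ from $\star_X$ on a $2$-form, i.e. $\tfrac14$. Assembling the pieces yields (\ref{Plasma_coords}).

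The main obstacle is the sign and combinatorial bookkeeping rather than anything conceptual: one must control the signs generated when interior products are commuted past wedge factors (for instance $i_{\pfrac{}{y^a}}dy^{0123}=(-1)^a dy^{\widehat a}$ and the reassembly $dy^p\wedge dy^{\widehat p}=(-1)^p dy^{0123}$), and the factors of $\tfrac12$ produced by $\IC{u}{\nu}du^{123}=\tfrac12\epsilon_{\nu\rho\sigma}du^{\rho\sigma}$, which cancel for the vertical ($a^\inu$) term --- the two ways of contracting $\IC{u}{\nu}$ into $i_{W^\speciesa_0}du^{123}$ coincide and leave coefficient $1$ --- but survive for the three horizontal terms, as displayed in the bracket of (\ref{Plasma_coords}). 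Everything else is a routine, if lengthy, coordinate computation.
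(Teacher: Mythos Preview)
Your proposal is correct and follows essentially the same route as the paper's own proof: expand $\theta^\speciesa_0=i_{W^\speciesa_0}(f^\speciesa_0\Omega)$ into its horizontal and vertical pieces (the paper writes your $a^\mu$ as $H^\mu$), apply $\IC{u}{\nu}$ and the pullbacks, then isolate the part with a full $dy^{0123}$ and two $dx$-legs before applying $\hashx$. The only cosmetic difference is that the paper packages your observation ``$dy^{cd}\wedge\IC{y}{abcd}$ kills everything except the $dy^{0123}$-part'' into the projector $S$ of (\ref{Plasma_def_S}) and works with $\IC{y}{ab}S$ instead, which is the same operator.
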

\begin{proof}
The derivation of (\ref{Plasma_coords}) from (\ref{Plasma_res_chi})
follows by first writing the Liouville vector field
(\ref{Plasma_MV_W0}) as
%[
\begin{align*}
W^\speciesa_0=u^a\pfrac{}{y^a}+H^\inu \pfrac{}{u^\inu}
\qquadtext{where}
H^\inu=-\Gamma^\inu{}_{ef}u^e u^f +
\frac{q^\speciesa}{m^\speciesa}F_{0ef} g^{\inu e} u^f
\end{align*}
%]
Then setting $f^\speciesa(y,u)=f^\speciesa_0(y,u)+f^\speciesa_1(y,u)$
it follows from (\ref{Plasma_f}) that
%[
\begin{align*}
\theta^\speciesa_0
&=
i_{W^\speciesa_0}(f_0^\speciesa\Omega)
=
f_0^\speciesa i_{W^\speciesa_0}\Big(\frac{\detg}{u_0} dy^{0123}\wedge du^{123}\Big)
\\&=
f_0^\speciesa \frac{\detg}{u_0} \Big( u^c Y_c\wedge du^{123} +\tfrac12 H^\imu
\epsilon_{\imu\inu\isigma} Y\wedge du^{\inu \isigma}\Big)
\end{align*}
%]
where $Y_a=i_{\pfrac{}{y^a}} dy^{0123}$ and $Y=dy^{0123}$. Consequently
%[
\begin{align*}
g^{\mu a} u^b \IC{u}{\mu}\theta^\speciesa_0=
f_0^\speciesa \frac{\detg}{u_0} g^{\imu a} u^b
\Big(-\tfrac12 u^c \epsilon_{\imu\inu\isigma} Y_c \wedge du^{\inu \isigma} -
H^\inu\epsilon_{\imu\inu\isigma} Y\wedge du^{\isigma}\Big)
\end{align*}
%]
and
%[
\begin{align*}
-d\tau\wedge g^{\mu a} u^b \IC{u}{\mu}\theta^\speciesa_0=
f_0^\speciesa \frac{\detg}{u_0} g^{\imu a} u^b\epsilon_{\imu\inu\isigma}
\Big(\frac{u^c}{2}  d\tau\wedge Y_c \wedge du^{\inu \isigma} +
H^\inu d\tau \wedge Y\wedge du^{\isigma}\Big)
\end{align*}
%]
Under the maps $\ppY^\speciesa$ and $\psihat^{\speciesa\star}$ one has
%[
\begin{align*}
\ppY^{\speciesa\star} (dy^a)=dy^a
\,,\qquad
\ppY^{\speciesa\star} (du^\mu)=du^\mu
\end{align*}
%]
and
%[
\begin{gather*}
\psihat^{\speciesa\star}(dy^a)=dy^a
\,,\quad
\psihat^{\speciesa\star}(du^\imu)=\pfrac{u^\imu}{x^a} dx^a + \pfrac{u^\imu}{y^a} dy^a
\,,\quad
\psihat^{\speciesa\star}(d\tau)=\pfrac{\tau}{x^a} dx^a + \pfrac{\tau}{y^a} dy^a
\end{gather*}
%]
So using the projector $S$ given in (\ref{Plasma_def_S}) yields
%[
\begin{align*}
-S \psihat^{\speciesa\star} \big( d\tau\wedge g^{\nu a} u^b \IC{u}{\nu}\theta^\speciesa_0 \big)=
f_0^\speciesa \frac{\detg}{u_0} g^{\imu a} u^b
\epsilon_{\imu\inu\isigma}\Big(
\frac{u^c}{2}\pfrac{\tau}{y^c}\pfrac{u^\inu}{x^d}\pfrac{u^\isigma}{x^e}
-\frac{u^c}{2}\pfrac{\tau}{x^d}\pfrac{u^\inu}{y^c}\pfrac{u^\isigma}{x^e}
&\\
+\frac{u^c}{2}\pfrac{\tau}{x^d}\pfrac{u^\inu}{x^e}\pfrac{u^\isigma}{y^c}
+H^\inu\pfrac{\tau}{x^d}\pfrac{u^\isigma}{x^e}
\Big) Y\wedge dx^{de} &
\end{align*}
%]
Hence from (\ref{Plasma_res_chi})
%[
\begin{align*}
\chi^\speciesa
&=
-\frac{q^{\speciesa2}}{m^\speciesa}\hashx \Big(
\IC{y}{ab}
S \psihat^{\speciesa\star} \Big( d\tau \wedge
\ppY^\star (g^{\nu a} u^b \IC{u}{\nu} \theta^\speciesa_0)\Big)\Big)
\\
&=
\frac{q^{\speciesa2}}{m^\speciesa}\hashx
\IC{y}{ab}\bigg(
f_0^\speciesa \frac{\detg}{u_0} g^{\imu a} u^b
\epsilon_{\imu\inu\isigma}\Big(
\frac{u^c}{2}\pfrac{\tau}{y^c}\pfrac{u^\inu}{x^d}\pfrac{u^\isigma}{x^e}
-\frac{u^c}{2}\pfrac{\tau}{x^d}\pfrac{u^\inu}{y^c}\pfrac{u^\isigma}{x^e}
\\
&\hspace{18em}
+\frac{u^c}{2}\pfrac{\tau}{x^d}\pfrac{u^\inu}{x^e}\pfrac{u^\isigma}{y^c}
+H^\inu\pfrac{\tau}{x^d}\pfrac{u^\isigma}{x^e}
\Big) Y\wedge dx^{de}\bigg)
\\&=
-\hashx
\bigg(
f_0^\speciesa \frac{\detg}{u_0} g^{\imu a} u^b
\epsilon_{\imu\inu\isigma}\epsilon_{abfg}\Big(
\frac{u^c}{2}\pfrac{\tau}{y^c}\pfrac{u^\inu}{x^d}\pfrac{u^\isigma}{x^e}
-\frac{u^c}{2}\pfrac{\tau}{x^d}\pfrac{u^\inu}{y^c}\pfrac{u^\isigma}{x^e}
\\
&\hspace{18em}
+\frac{u^c}{2}\pfrac{\tau}{x^d}\pfrac{u^\inu}{x^e}\pfrac{u^\isigma}{y^c}
+H^\inu\pfrac{\tau}{x^d}\pfrac{u^\isigma}{x^e}
\Big) dx^{de}\wedge dy^{fg}\bigg)
\\&=
\frac{q^{\speciesa2}}{m^\speciesa} f_0^\speciesa \frac{\detg^{3/2}}{2 u_0}  g^{\imu b} u^a
\epsilon_{\imu\inu\isigma}\epsilon_{abfg}\epsilon^{dehi}\Big(
\frac{u^c}{2}\pfrac{\tau}{y^c}\pfrac{u^\inu}{x^d}\pfrac{u^\isigma}{x^e}
-\frac{u^c}{2}\pfrac{\tau}{x^d}\pfrac{u^\inu}{y^c}\pfrac{u^\isigma}{x^e}
\\
&\hspace{18em}
+\frac{u^c}{2}\pfrac{\tau}{x^d}\pfrac{u^\inu}{x^e}\pfrac{u^\isigma}{y^c}
+H^\inu\pfrac{\tau}{x^d}\pfrac{u^\isigma}{x^e}
\Big) dx_{hi}\wedge dy^{fg}
\end{align*}
%]
\end{proof}

\end{document}